\newcommand{\CC}{C^*}
\newcommand{\N}{\mathbb{N}}
\newcommand{\R}{\mathbb{R}}
\newcommand{\OPT}{\mathcal{OPT}}
\newcommand{\ttx}[1]{\texttt{#1}}
\def\OPT{{\mathcal{OPT}}}
\def\expec#1#2{{\bf \mathbb{E}}_{#1}[ #2 ]}
\def\expecf#1#2{{\bf \mathbb{E}}_{#1}\left[ #2 \right]}
\numberwithin{equation}{section}
\numberwithin{figure}{section}
\theoremstyle{plain}
	\newtheorem{theorem}{Theorem}[section]
	\newtheorem{lemma}[theorem]{Lemma}
	\newtheorem{fact}[theorem]{Fact}
\theoremstyle{definition}
	\newtheorem{definition}[theorem]{Definition}
	\newtheorem*{remark*}{Remark}
\newcounter{Frame}
\newenvironment{Frame}[1][htb]{%
\refstepcounter{Frame}
    \begin{mdframed}[%
        frametitle={#1},
        skipabove=\baselineskip plus 2pt minus 1pt,
        skipbelow=\baselineskip plus 2pt minus 1pt,
        linewidth=1.0pt,
        frametitlerule=true,
        nobreak=true
    ]%
}{%
    \end{mdframed}
}
\newenvironment{sproof}
{
 \par\noindent{\bfseries\upshape Proof sketch\ }
}
{\qed}
\title{Robust Communication-Optimal Distributed Clustering Algorithms\thanks{This work was supported in part by NSF grants CCF-1422910, CCF-1535967, IIS-1618714, an Office of Naval Research (ONR) grant N00014-18-1-2562, an Amazon Research Award, a Microsoft Research Faculty Fellowship, and a National Defense Science \& Engineering Graduate (NDSEG) fellowship. Part of this work was done while Ainesh Bakshi and David Woodruff were visiting the Simons Institute for the Theory of Computing.
This paper subsumes an earlier version of the paper by a subset of the authors \cite{balcan2017general}.
}} 
\author{
Pranjal Awasthi 
\\
Rutgers University\\
pranjal.awasthi@rutgers.edu
\and 
Ainesh Bakshi
\\
Carnegie Mellon University\\
abakshi@cs.cmu.edu
\and
Maria-Florina Balcan
\\
Carnegie Mellon University\\
ninamf@cs.cmu.edu
\and
Colin White
\\
Carnegie Mellon University\\
crwhite@cs.cmu.edu
\and David P. Woodruff
\\
Carnegie Mellon University\\
dwoodruf@cs.cmu.edu}
\begin{document}

\maketitle

\begin{abstract}

In this work, we study the $k$-median and $k$-means clustering problems 
when the data is distributed across many servers and can contain outliers. 
While there has been a lot of work on these problems for worst-case instances, 
we focus on gaining a finer understanding through the lens of beyond worst-case analysis. 
Our main motivation is the following: for many applications 
such as clustering proteins by function or clustering communities in a social network,
there is some unknown target clustering, 
and the hope is that running a $k$-median or $k$-means algorithm will produce clusterings which are close to matching the target clustering.
Worst-case results can guarantee constant factor approximations to the optimal $k$-median or $k$-means objective value, 
but not closeness to the target clustering.

Our first result is a distributed algorithm which returns a near-optimal clustering assuming a natural notion of stability, 
namely, \emph{approximation stability}~\cite{as}, even when a constant fraction of the data are outliers.
The communication complexity is $\tilde O(sk+z)$ where $s$ is the number of machines, $k$ is the number of clusters, and $z$ is the number of outliers.

Next, we show this amount of communication cannot be improved even in the setting when the input satisfies various non-worst-case assumptions. 
We give a matching $\Omega(sk+z)$ lower bound on the communication required both for approximating the optimal $k$-means or $k$-median cost up to any constant, and for returning a clustering that is close to the target clustering in Hamming distance.
These lower bounds hold even when the data satisfies approximation stability or other common notions of stability, and the cluster sizes are balanced. Therefore, $\Omega(sk+z)$ is a communication bottleneck, even for real-world instances. 

\end{abstract}

\clearpage

\section{Introduction} \label{sec:intro}
Clustering is a fundamental problem in machine learning with applications in many areas including computer vision, text analysis, bioinformatics, and so on. 
The underlying goal is to group a given set of points to maximize similarity inside a group
and dissimilarity among groups. A common approach to clustering is to set up an objective function and then approximately find the
optimal solution according to the objective. Common examples of these objective functions
include $k$-median and $k$-means, in which the goal is to find $k$ centers to minimize
the sum of the distances (or sum of the squared distances) from each point to its closest center.
Motivated by real-world constraints, further variants of clustering have been studied. For instance, in $k$-clustering with outliers,
the goal is to find the best clustering (according to one of the above objectives) 
after removing a specified number of data points, which is useful for noisy data.
Finding approximation algorithms to different clustering objectives and variants has attracted significant attention in the computer science community
\cite{arya2004local,byrka2015improved,charikar1999constant,outliers,chen2008constant,gonzalez1985clustering,makarychev2016bi}.

As datasets become larger, sequential algorithms designed to run on a single machine are no longer feasible for real-world applications. 
Additionally, in many cases data is naturally spread out among multiple locations.
For example, hospitals may keep records of their patients locally, but may want to cluster the
entire spread of patients across all hospitals in order to do better data analysis and inference.
Therefore, distributed clustering algorithms have gained popularity in recent years 
\cite{balcan2013distributed,distr_balanced,distr_kcenter, guha2017distributed, li2018distributed, cohen2015dimensionality, chen2016communication}. 
In the distributed setting, it is assumed that the data is partitioned arbitrarily across
$s$ machines, and the goal is to find a clustering which approximates the optimal
solution over the entire dataset while minimizing communication among machines.
Recent work in the theoretical machine learning community establishes guarantees on the
clusterings produced in distributed settings for certain problems 
\cite{balcan2013distributed,distr_balanced,distr_kcenter}.
For example, \cite{distr_kcenter} provides distributed algorithms for $k$-center and $k$-center with outliers,
and \cite{distr_balanced} introduces distributed algorithms for capacitated $k$-clustering under any $\ell_p$ objective. Along similar lines, the recent work of \cite{guha2017distributed} 
provides constant-factor approximation algorithms for $k$-median and $k$-means with $z$ outliers in the distributed setting. The work of Guha et al.\ also provides the best known communication complexity bounds for these settings and they scale as $O(sk+z)$ where $s$ is the number of machines, and $z$ is the number of outliers.

Although the above results provide a constant-factor approximation to $k$-median or $k$-means objectives,
many real-world applications desire a clustering that is close to a `ground truth' clustering in terms of the structure, 
i.e., the way the points are clustered rather than in terms of cost.
For example, for applications such as clustering proteins by function or clustering communities in a social network,
there is some unknown target clustering, and the hope is that running a $k$-median or $k$-means algorithm will produce 
clusterings which are close to matching the target clustering.
While in general having a constant factor approximation provides no guarantees on the closeness to the optimal clustering, a series of recent works has established that this is possible if the data has certain structural properties~\cite{awasthi2012center,awasthi2012improved,
as,balcan2012clustering,bilu2012stable,dick2017data,kumar2010clustering,voevodski2011min}.
For example, the \emph{$(1+\alpha,\epsilon)$-approximation stability} condition defined by \cite{as}
states that any $(1+\alpha)$-approximation to the clustering
objective is $\epsilon$-close to the target clustering. 
For such instances, it is indeed possible to output a clustering close to the ground truth in polynomial time, 
even for values of $\alpha$ such that computing a $(1+\alpha)$-approximation is NP-hard. 
We follow this line of research and ask whether distributed clustering is possible for non worst-case instances, in the presence of outliers. 
\vspace{-0.1in}
\subsection{Our contributions}
A distributed clustering instance consists of a set of $n$ points in a metric space partitioned arbitrarily
across $s$ machines. The problem is to optimize the $k$-median/$k$-means 
objective while minimizing the 
amount of communication across the machines. We consider algorithms that 
approximate the optimal cost as well as computing a clustering 
close to the target clustering in Hamming distance. Our contributions are as follows:
\begin{enumerate}
    \item In Section \ref{sec:aso}, we give a centralized clustering algorithm whose 
    output is $\epsilon$-close to the target clustering, in the presence of $z$ outliers,
    assuming the data satisfies $(1+\alpha,\epsilon)$-approximation stability and 
    assuming a lower bound on the size of the optimal clusters.
    To the best of our knowledge, this is the first polynomial time algorithm for clustering approximation stable instances in the presence of outliers. 
    Our results hold for arbitrary values of $z$, including when a
    constant fraction of the points are outliers, as long as there is a lower bound on the minimum cluster size.
    
    \item In Section \ref{sec:as}, we give a distributed algorithm whose output is close to the target clustering,
    assuming the data satisfies $(1+\alpha,\epsilon)$-approximation stability. 
    The communication complexity is 
    $\widetilde{O}\left(sk\right)$, where $s$ is the number of servers and $k$ is the number of clusters. In Section \ref{sec:distr}, we extend this to handle $z$ outliers, with a communication complexity $\widetilde{O}\left(sk+z\right)$.
    This matches the worst-case communication of \cite{guha2017distributed}, while outputting a near-optimal clustering by taking advantage of new 
    structural guarantees specific to approximation stability with outliers.
    \item While the above algorithms improve over worst-case distributed 
    clustering algorithms in terms of quality of the returned clustering,
    our algorithms use the same amount of communication as the worst case 
    protocols.
    In Section \ref{sec:lowerbounds}, we show that the $\Omega(sk)$ and 
    $\Omega(sk+z)$ communication costs for clustering without and with 
    outliers 
    are unavoidable even if data satisfies many types of stability assumptions
    that have been studied in the literature.
    Our lower bound of $\Omega(sk+z)$ for obtaining a $c$-approximation (for 
    any $c\geq 1$) holds even when the data is arbitrarily stable, 
    e.g., $(1+\alpha,\epsilon)$-approximation stable for all $\alpha\geq 0$ 
    and $0\leq\epsilon<1$.
    \item We also give an $\Omega(sk+z)$ lower bound for the problem of 
    computing a clustering whose Hamming distance is close to the optimal 
    clustering,
    even when the data is approximation-stable. Finally, we prove that our above $\Omega(sk + z)$ lower bounds hold for finding a clustering close to the 
    optimal in Hamming distance even when it is guaranteed that
    the optimal clusters are completely balanced, i.e., \ each cluster is of 
    size $\frac{n-z}{k}$ (in addition to the guarantee that the clustering 
    satisfies approximation stability), implying our algorithms from Section \ref{sec:aso} are optimal. Therefore, $\Omega(sk +z)$ is a fundamental communication bottleneck, even for real-world clustering instances. 

\end{enumerate}

\subsection{Related Work} \label{sec:related_work}
There is a long line of work on approximation algorithms for $k$-median and $k$-means clustering 
\cite{charikar1999constant,kanungo2002local,makarychev2016bi}, 
and the current best approximation ratios are 2.675 \cite{byrka2015improved} and 6.357 \cite{ahmadian2016better}, respectively.
The first constant-factor approximation algorithm for $k$-median with $z$ outliers was given by Chen \cite{chen2008constant},
and the current best approximation ratios for $k$-median and $k$-means  with outliers are $7.081+\epsilon$ and $53.002 + \epsilon$, respectively, given by Krishnaswamy et al.\ \cite{krishnaswamy2017constant}.
There is also a line of work on clustering with balance constraints on the clusters \cite{ahmadian2016approximation,aggarwal2006achieving,dick2017data}.
For $k$-median and $k$-means clustering in distributed settings, the work of Balcan et al.\ showed a coreset construction for $k$-median and
$k$-means, which leads to a clustering algorithm with $\tilde O(skd)$ communication, where $d$ is the dimension,
and also studied more general graph topologies for distributed computing \cite{balcan2013distributed}.
Malkomes et al.\ showed a distributed 13- and 4- approximation algorithm for $k$-center with and without
outliers, respectively \cite{distr_kcenter}.
Chen et al.\ studied clustering under the
broadcast model of distributed computing, and also proved a communication complexity lower bound
of $\Omega(sk)$ for distributed clustering \cite{chen2016communication}, building on a
recent lower bound for set-disjointness in the message-passing model \cite{braverman2013tight}.
Recently, \cite{guha2017distributed} showed a distributed algorithm with $\tilde O(sk+z)$ communication 
for computing a constant-factor approximation to $k$-median clustering with $z$ outliers. 
They also provide bicriteria approximations that remove $(1+\epsilon)z$ outliers to get a clustering of cost $O\left(1+\frac{1}{\epsilon}\right)$ times
the cost of the optimal $k$-median clustering with $z$ outliers, for any $\epsilon>0$. Even more recently, \cite{li2018distributed} showed that there exists a bi-criteria algorithm with communication independent of $z$ that achieves a constant approximation to the cost. In particular, their algorithm outputs $(1+\epsilon)z$ outliers and achieves a $(24 + \epsilon)$-approximation with $O\left(\frac{sk}{\epsilon} + \frac{s \log{\Delta}}{\epsilon} \right)$ communication, where $\Delta$ is the aspect ratio of the metric.

In recent years, there has also been a focused effort towards understanding clustering for non worst-case models
\cite{ostrovsky2012effectiveness,ackerman2009clusterability,bilu2012stable,kumar2010clustering}.
The work of Balcan et al.\ defined the notion of approximation stability and showed an algorithm which utilizes the structure to output a nearly
optimal clustering \cite{as}. 
Approximation stability has been studied in a wide range of contexts, including clustering \cite{symmetric,balcan2009agnostic,balcan2009finding}, the $k$-means$++$ heuristic \cite{agarwal2015k},
social networks \cite{Tim-social-net}, and computing Nash-equilibria \cite{awasthi2010nash}. 
A recent paper by Chekuri and Gupta introduces the model of clustering with outliers under perturbation resilience, a notion of stability which is
related to approximation stability \cite{chekuri2018perturbation}.

\section{Preliminaries}\label{sec:prelim}

Given a set $V$ of points of size $n$, a distance metric $d$, and an integer $k$,
let $\mathcal{C}$ denote a clustering of $V$, which we define as a partition of $V$ into $k$ subsets
$C_1,\dots,C_k$. 
Each cluster $C_i$ contains a center $c_i$.
When $d$ is an arbitrary distance metric, we must choose the centers from the point set.
If $V\subseteq\R^d$ and the distance metric is the standard Euclidean distance,
then the centers can be any $k$ points in $\R^d$.
In fact, this distinction only changes the cost of the optimal clustering by at most a factor of 2 by the triangle inequality for any $p$ (see, e.g., \cite{awasthi2014center}). 

The $k$-median, and the $k$-means costs are
$\sum_i\sum_{v\in C_i}d(c_i,v)$, and $\sum_i\sum_{v\in C_i}d(c_i,v)^2$ respectively.
For $k$ clustering with $z$ outliers, the problem is to compute the minimum cost clustering over $n-z$ points,
e.g., we must decide which $z$ points to remove, and how to cluster the remaining points, to minimize the cost.
We will denote the optimal $k$-clustering with $z$ outliers by $\mathcal{OPT}$, 
and we denote the set of outliers for $\OPT$ by $Z$.
We often overload notation and let $\mathcal{OPT}$ denote the objective value of the optimal clustering as well.
We denote the optimal clusters as $\CC_1,\dots, \CC_k$, with centers $c_1,\dots,c_k$.
We say that two clusterings $\mathcal{C}$ and $\mathcal{C}'$ are $\delta$-close if they
differ by only $\delta (n-z)$ points, i.e., $\min_\sigma\sum_{i=1}^k |C_i\setminus C_{\sigma(i)}'|<\delta (n-z)$. Let $\CC_{\min} = \min_{j\in[k]} |\CC_j|$, i.e., the minimum cluster size. Given a point $c \in V$, we define $V_c \subset V$ to be the closest set of $\CC_{\min}$ points to $c$.  

We study a notion of stability called approximation stability.
Intuitively, a clustering instance satisfies this assumption if 
all clusterings close in \emph{value} to 
$\mathcal{OPT}$ are also close in terms of the clusters themselves.
This is a desirable property when running an approximation algorithm, since in many applications,
the $k$-means or $k$-median costs are proxies for the final goal of recovering a clustering that is
close to the desired ``target'' clustering. Approximation stability makes this assumption explicit.
This was first defined for clustering with $z=0$~\cite{as}, however, we generalize the definition to the setting with outliers.
\begin{definition}(\emph{approximation stability.})
\label{def:alpha}
A clustering instance satisfies \emph{$(1+\alpha,\epsilon)$-approximation stability} for $k$-median or $k$-means with $z$ outliers if for
all $k$-clusterings with $z$ outliers, denoted by $\mathcal{C}$, if $\text{cost}(\mathcal{C})\leq (1+\alpha)\cdot\OPT$, then $\mathcal{C}$ is
$\epsilon$-close to $\mathcal{OPT}$.
\end{definition}

This definition implies that all clusterings close in cost to $\mathcal{OPT}$ must have nearly the same set of outliers.
This follows because if $\mathcal{C}$ contains more than $\epsilon (n-z)$ points from $Z$, then
$\mathcal{C}$ and $\mathcal{OPT}$ cannot be $\epsilon$-close.
This is similar to related models of stability for clustering with outliers, e.g.~\cite{chekuri2018perturbation}. 
Note it is standard in this line of work to assume the value of $\alpha$ is known~\cite{as}.

We will study distributed algorithms under the standard framework of the \emph{coordinator model}. There are $s$ servers, and a designated coordinator. Each server can send messages back and forth with the coordinator. This model is very similar to the \emph{message-passing model}, also known as the \emph{point-to-point} model, in which any pair of machines can send messages back and forth. In fact, the two models are equivalent up to constant factors in the communication complexity \cite{braverman2013tight}. Most of our algorithms can be applied to the mapreduce framework with a constant number of rounds. For more details, see \cite{distr_balanced,distr_kcenter}.

For our communication lower bounds, we work in the multi-party message passing model, 
where there are $s$ players, $P_1, P_2, \ldots, P_s$, who receive inputs $X^1$, $X^2$, \ldots $X^s$ respectively. They have access to private randomness 
as well as a common publicly shared random string $R$, and the objective 
is to communicate with a central coordinator who computes a function 
$f: X^1 \times X^2 \ldots \times X^s \to \{0, 1\} $ on the joint 
inputs of the players. The communication has multiple rounds and each player 
is allowed to send messages to the coordinator. Note, we can simulate communication 
between the players by blowing up the rounds by a factor of $2$. 
Given $X^i$ as an input to player $i$, let $\Pi\left(X^1, X^2, \ldots X^s\right)$ be 
the random variable that denotes the transcript between the players and the 
referee when they execute a protocol $\Pi$. For $i \in [s]$, let $\Pi_i$ 
denote the messages sent by $P_i$ to the referee. 

A protocol $\Pi$ is called a $\delta$-error protocol for function $f$ if there 
exists a function $\Pi_{out}$ such that for every input, 
$Pr\left[\Pi_{out}\left(\Pi(X^1, X^2, \ldots X^s)\right) = f(X^1, X^2, \ldots X^s)\right] \geq 1 - \delta$. 
The communication cost of a protocol, denoted by $|\Pi|$, is the maximum length of
$\Pi\left(X^1, X^2, \ldots, X^s\right)$ over all possible inputs and random coin 
flips of all the $s$ players and the referee. The randomized communication 
complexity of a function $f$, $R_{\delta}(f)$, is the communication cost 
of the best $\delta$-error protocol for computing $f$.

For our lower bounds, we also consider that the data satisfies a very strong, general notion of stability which we call $c$-separation.  
\begin{definition}(\emph{separation.})
Given, $c\geq 1$ and a clustering objective (such as $k$-means), a clustering instance satisfies \emph{$c$-separation} if 
$$c\cdot\max_i\max_{u,v\in \CC_i}d(u,v)<\min_i\min_{u'\in \CC_i,v'\notin \CC_i}d(u',v')$$
\end{definition}

Intuitively, this definition implies the maximum distance between any two points in one cluster is a factor $c$ smaller than the
minimum distance across clusters, as well as any clustering that
achieves a $(1+\alpha)$ approximation to the optimal cost must be
$\epsilon$ close to the target clustering in Hamming distance. Although this definition is quite strong, it has been used in several papers (for clustering with no outliers)
to show guarantees for various algorithms \cite{balcan2008discriminative,pruitt2011ncbi,kobren2017hierarchical}. 
We note that this notion of stability captures a wide class of previously studied notions including perturbation resilience \cite{bilu2012stable, awasthi2012center,balcan2012clustering, angelidakis2017algorithms} and approximation stability. 

\begin{definition}(\emph{perturbation resilience.})
For $\beta> 0 $ , a clustering instance $(V,d)$  satisfies $1+\alpha$-perturbation resilience for the $k$-means objective, if for any function $d': V \times V \to \mathbb{R}_{\geq 0}$, such that for all $p, q \in V$, $d(p,q) \leq d'(p,q) \leq (1+\beta)d(p,q)$, and the optimal clustering under $d'$ is unique and equal to the optimal clustering under $d$, for the $k$-means objective.
\end{definition}

We note we can replace the objective with any center based objective such as $k$-median or $k$-center. 
Next,
we show that \emph{separation} implies \emph{approximation stability} and
\emph{perturbation resilience}. 
We defer the proof to Appendix \ref{sec:separable}. 

\begin{lemma}
\label{lem:reductions}
Given $\alpha,\epsilon>0$, and a clustering objective (such as $k$-median), let $(V,d)$ be  
a clustering instance which satisfies $c$-separation, for $c>(1+\alpha)n$ (where $n=|V|$).
Then the clustering instance also satisfies $(1+\alpha, \epsilon)$-approximation stability and $(1+\alpha)$-perturbation resilience.
\end{lemma}

\section{Centralized Approximation Stability with Outliers} \label{sec:aso}

In this section, we give a centralized algorithm for clustering with $z$ outliers under approximation stability,
and then extend it to a distributed algorithm for the same problem.
To the best of our knowledge,
this is the first result for clustering with outliers under approximation stability, as well as the first distributed algorithm
for clustering under approximation stability even without outliers.

Our algorithm can handle any fraction of outliers, even when the set of outliers makes up
a constant fraction of the input points.
For simplicity, we focus on $k$-median. We show how to apply our result to $k$-means at the end of this section.

\begin{theorem}(Centralized Clustering.)
\label{thm:aso}
Algorithm \ref{alg:aso} runs in
poly$\left(n,\left(\frac{\alpha}{\epsilon}\left(k+\frac{1}{\alpha}\right)\right)^\frac{1}{\alpha}\right)$
time and outputs a clustering that is 
$\epsilon$-close to $\OPT$ for $k$-median with $z$ outliers under $(1+\alpha,\epsilon)$-approximation stability, 
assuming each optimal cluster $\CC_i$ has cardinality at least $2\left(1+\frac{5}{\alpha}\right)\epsilon(n-z)$.
\end{theorem}

Note that the runtime is at most poly$\left(n^\frac{1}{\alpha}\right)$, and if $\frac{\alpha}{\epsilon}\in\Theta(k)$, the runtime is poly$\left(n,k^\frac{1}{\alpha}\right)$.
The algorithm has two high-level steps. 
First, we use standard techniques from approximation stability without outliers to find a list of clusters $\mathcal{X}$, 
which contains clusters from the optimal solution (with $\leq\left(1+\frac{1}{\alpha}\right)\epsilon(n-z)$ mistakes),
and clusters made up mostly of outlier points.
We show how all but $1/\alpha$ of the outlier clusters must have high cost if their size were to be extended
to the minimum optimal cluster size, and can thus be removed from our list $\mathcal{X}$.
Finally, we use brute force enumeration to remove the final $\frac{1}{\alpha}$ outlier clusters, 
and after another cluster purifying step,
we are left with a $k$ clustering which $(1+\alpha)$-approximates the cost and thus is guaranteed to be $\epsilon$-close to optimal.

We begin by outlining the key properties of $(1+\alpha,\epsilon)$-approximation stability.
Let $w_{avg}$ denote the average distance from each point to its optimal center, 
so $w_{avg}\cdot (n-z)=\mathcal{OPT}$.
The following lemma is 
the first of its kind for clustering with outliers and establishes two key properties for approximation stable instances. Intuitively, the
first property bounds the number of points that are far away from
their optimal
center, and follows from Markov's inequality. The second property
bounds the number of points
that are either closer on average to the center of a non-optimal
cluster that the optimal one or
are outliers that are close to some optimal center as compared to a
point belonging to that cluster.

\begin{Frame}[\textbf{Algorithm \ref{alg:neighborhood}} : Computing the Neighborhood Graph]
\label{alg:neighborhood}
\ttx{Input}: Set of points $V$, parameters $\tau$, $b$
\begin{enumerate}
    \item Create the threshold graph $G_\tau=(V,E)$ by adding edge $(u,v)$ iff $d(u,v)\leq\tau$.
	\item Create graph $G'=(V,E')$ by adding edge $(u,v)$ iff $u$ and $v$ share $\geq b$ neighbors in $G_\tau$.
\end{enumerate}
\ttx{Output:}  Connected components of $G'$ 
\end{Frame}

\begin{lemma} \label{lem:struct}
Given a $(1+\alpha,\epsilon)$-approximation stable clustering instance $(V,d)$
for $k$-median such that for all $i$, $|C^*_i|>2\epsilon(n-z)$, then 
\begin{itemize}
    \item \textbf{Property 1:} For all $y>0$, there exist at most $\frac{y\epsilon}{\alpha}(n-z)$ 
points, $v$, such that $d(v,c_v)\geq\frac{\alpha w_{avg}}{y\epsilon}$.
    \item \textbf{Property 2:} There are fewer than $\epsilon (n-z)$ total points with one of the following two properties: the point $v$ is in an optimal cluster $C^*_i$, and there
exists $j\neq i$ such that $d(v,c_j)-d(v,c_i)\leq\frac{\alpha w_{avg}}{\epsilon}$, 
or, the point $v$ is in $Z$, and there exists $i$ and $v'\in C^*_i$ such that $d(v,c_i)\leq d(v',c_i)+\frac{\alpha w_{avg}}{\epsilon}$ (recall that $Z$ denotes the set of outliers from the optimal clustering).
\end{itemize}
\end{lemma}

\begin{proof}
Property 1 follows from Markov's inequality.
To prove property 2, assume the claim is false. 
Then there exists a set of points $V'\subseteq V\setminus Z$ such that each point $v \in V'$ is closer to a different center than its own center,
and a set of outlier points $Z'\subseteq Z$ such that each point $z \in Z'$  is close to some center, and $|V'\cup Z'|=\epsilon (n-z)$.
We define a new clustering $\mathcal{C}'$ by starting with $\mathcal{OPT}$ and making the following changes: 
each point $v\in V'$ moves to its second-closest center, and each point $z\in Z'$ joins its closest cluster, and then we
remove the $|Z'|$ points in $V\setminus V'\setminus Z$ which are furthest to their centers
(since all optimal clusters are size $>2\epsilon (n-z)$ and $|V'\cup Z'|=\epsilon (n-z)$, this is well-defined).
The cost increase of this new clustering will be at most 
$\frac{\alpha w_{avg}}{\epsilon}(\epsilon (n-z))\leq \alpha w_{avg} (n-z)$,
but it is not $\epsilon$-close to $\OPT$, causing a contradiction.
\end{proof}

We define a point as \emph{bad} if it falls into the bad case of either 
Property 1 (with $y=5$) or Property 2, and we denote the set of bad points by $B$.
Otherwise, a point is \emph{good}. From Properties 1 and 2, $|B|\leq \left(1+\frac{5}{\alpha}\right)\epsilon(n-z)$. 
For each $i$, let $G_i$ denote the good points from the optimal cluster $\CC_i$.
We then consider the graph $G'=(V,E')$ called the \emph{neighborhood graph}, 
constructed by adding an edge $(u,v)$ iff there are at least
$|B|+2$ points $w$ that  that are less than a threshold $\tau$, i.e., 
$d(u,w),d(v,w)\leq \tau=\frac{2w_{avg}}{5}$. 
Under approximation stability, 
the graph $G'$ has the following structure: there is an edge between all pairs 
of good points from $\CC_i$ and there is no edge between any pair of 
good points belonging to distinct clusters, $\CC_i, \CC_j$. Further, these points 
do not have any common neighbors. 
Since the set of good points in each cluster, denoted by $G_i$, form cliques of size $>|B|$ and are far away from one another, and there are $\leq |B|$ bad points,
it follows that each $G_i$ is in a unique connected component $C_i'$ of $G'$.

In the setting without outliers, the list of connected components of size greater than
$\left(1+\frac{5}{\alpha}\right)\epsilon n$ is exactly $\{C_1',\dots,C_k'\}$.
However, in the setting with outliers, we can only return a set $\mathcal{X}$ which includes $\{C_1',\dots,C_k'\}$ but also
may include many other outlier clusters which are hard to distinguish from the optimal clusters. 
Although approximation stability tells us that any set $Z'$ of outliers must have a much higher cost than any optimal cluster $\CC_i$ 
(since we can arrive at a contradiction by replacing the cluster $\CC_i$ with the cluster $Z'$),
this is not true when the size of $Z'$ is even slightly smaller than $\CC_i$. 
Since the good clusters returned are only $O\left(\frac{\epsilon}{\alpha}\right)$-close to optimal, many good clusters may be smaller than outlier clusters,
and so a key challenge is to distinguish outlier clusters $Z'$ from good clusters $C_i'$.

To accomplish this task, we compute the minimum cost of each cluster, pretending that its size is at least $\CC_{\min}$
(the size of the minimum optimal cluster, which we can guess in polynomial time).
In our key structural lemma (Lemma \ref{lem:badclique}), we show that nearly all outlier components will have large cost.
Given a set of points $Q$, we define $\text{cost}_{\min}(Q)$ to be the minimum cost of $Q$ if it were extended to $\CC_{\min}$ points. 
Note, $\text{cost}_{\min}(Q)$ can be computed in polynomial time by iterating over all points $c\in Q$, for each such point constructing $V_c$ by adding the the $\CC_{\min}-|Q|$ points closest to $c$, computing the resulting cost, and taking the minimum over all such costs.

\begin{lemma} \label{lem:badclique}
Given an instance of $k$-median clustering with $z$ outliers
such that each optimal cluster $|\CC_i|>2\left(1+\frac{5}{\alpha}\right)\epsilon(n-z)$, for any $x \in \mathbb{N}$, the instance satisfies $(1+\alpha,\epsilon)$-approximation stability for $\alpha>\frac{35}{5x-4}$, and 
 there are at most $x$ disjoint sets of outliers $Z'$ such that $|Z'|>\min_i |\CC_i|-\left(1+\frac{5}{\alpha}\right)\epsilon(n-z)$ and 
$\text{cost}_{\min}(Z')\leq \left(3+\frac{2\alpha}{5}\right)\frac{1}{x}\OPT$.
\end{lemma}

The key ideas behind the proof are as follows.
If there are two sets of outliers $Z_1$ and $Z_2$ both with fewer than $\CC_{\min}$ points, then we can obtain a contradiction
by taking into account both sets of outliers. Set $1\leq z_1,z_2\leq \left(1+\frac{5}{\alpha}\right)\epsilon(n-z)$ such that $|Z_1|=\CC_{\min}-z_1$ and
$|Z_2|=\CC_{\min}-z_2$, and assume without loss of generality that $z_1<z_2$. 
We design a different clustering $\mathcal{C}'$ by first replacing the minimum-sized cluster 
in the optimal clustering with $Z_1$.
The cost of the points in $Z_2$ is low by assumption. However, we have now potentially 
assigned more than $z$ points to be outliers by an additive $z_1$ amount. 

Hence, in order to create a valid clustering that is far from $\OPT$ we need to add back at 
least $z_1$ more outlier points. 
We do this by choosing $z_1$ outlier points from $Z_2$ that are closest to an optimal center 
in $\OPT$. 
To bound the additional cost incurred, we use the fact that $Z_2$ must be close to at least 
$z_2$ points from $V\setminus Z$, by the assumption that $\text{cost}_{\text{min}}(Z_2)$ is low,
and use these points to bound the distance from centers in $\OPT$ to the $z_1$ points that 
were added back.
In the full proof, we extend this idea to $x$ sets $Z_1,\dots,Z_x$ to achieve a tradeoff 
between $x$ and $\alpha$.

\vspace{0.1in}
\noindent \textit{Proof of Lemma \ref{lem:badclique}.}
Assume there are $x$ such disjoint sets of outliers, $Z_1,\dots,Z_x$ such that $|Z'|>\min_i |\CC_i|-\left(1+\frac{5}{\alpha}\right)\epsilon(n-z)$ and 
$\text{cost}_{\text{min}}(Z')\leq \left(3+\frac{2\alpha}{5}\right)\frac{1}{x}\OPT$.
First we show that for all $1\leq i\leq x$, $Z_i$ cannot contain  more than $\CC_{min}$ points.
Assume for sake of contradiction that $|Z_i|\geq \CC_{min}$. Then, there exists a center $c'\in Z_i$ such that
$\sum_{v\in Z_i}d(c',v)\leq\left(3+\frac{2\alpha}{5}\right)\frac{1}{x}\OPT$.
Then we arrive at a contradiction by replacing the minimum size optimal cluster with
$Z_i$, since the increase in cost is at most 
$$ \left(3+\frac{2\alpha}{5}\right)\frac{1}{x}\OPT<\alpha\cdot\OPT$$ (using $\alpha>\frac{35}{5x-4}$)
but the new clustering is not $\epsilon$-close to $\OPT$.

Now we can assume that all $Z_i$ contain fewer than $\CC_{min}$ points.
For all $1\leq i\leq x$, we denote $z_i=\CC_{min}-|Z_i|$, 
where $0<z_i<\left(1+\frac{5}{\alpha}\right)\epsilon(n-z)$. Recall, $V_c$ is the set of $\CC_{\min}$ closest points to $c$. 
Furthermore, denote $c_i'=\text{argmin}_{c\in Z_i}\sum_{v\in V_c}d(c,v)$ where $Z_i\subseteq V_c$ and $V_c\setminus Z_i$ 
contains the $\CC_{min}-|Z_i|$ closest points to $c$.
Then by assumption, 
$$\sum_{v\in V_{c_i'}}d(c_i',v)\leq \left(3+\frac{2\alpha}{5}\right)\frac{1}{x}\OPT.$$
Now given an arbitrary $1\leq i\leq x$, we modify $\OPT$ to create a new clustering $\mathcal{C}'$ as follows.
First we remove an arbitrary optimal cluster with size $\CC_{min}$ (by definition, such an optimal cluster must exist), 
then we add a new cluster $Z_i$ with center $c_i'$, and finally, we add 
the $z_i$ outliers closest to the current centers, to bring the size of the clustering back up to $n-z$.
Now we analyze the cost of this new clustering. We will show that for some $i$, the cost of this clustering is at most $(1+\alpha) \OPT$, 
contradicting approximation stability.
By assumption, we know that $$\sum_{v\in Z_i\cap Z} d(c_i',v)\leq \left(3+\frac{2\alpha}{5}\right)\frac{1}{x}\OPT,$$ 
so we only need to bound the cost of adding the $z_i$ next-closest outliers.
We set $j=i+1$ (or $j=1$ if $i=x$), and we consider the set $Z_j$. 
By assumption, $$\sum_{v\in V_{c_j'}}d(c_j',v)\leq 
\left(3+\frac{2\alpha}{5}\right)\frac{1}{x}\OPT.$$
Since $$Z_j\geq \CC_{min}-\left(1+\frac{5}{\alpha}\right)\epsilon(n-z)\geq\frac{1}{2}\cdot \CC_{min}$$ and $z_i=\CC_{min}-|Z_i|<\frac{1}{2}\cdot \CC_{min}$ 
there are at least $z_i$ non-outliers in $V_{c_j'}$. Call these points $V'_j$. 
Denote $\text{cost}(V'_j)=\sum_{v\in V'_j}d(v,c(v))$, where $c(v)$ denotes the center for $v$ in $\OPT$.
Also, we denote $\text{cost}'(V'_j)=\sum_{v\in V'_j}d(c_j',v)$ and $\text{cost}'(Z_j)=\sum_{v\in Z_j}d(c_j',v)$,
so $\text{cost}'(V'_j)+\text{cost}'(Z_j)\leq \left(3+\frac{2\alpha}{5}\right)\frac{1}{x}\OPT$.
Then by Markov's inequality, there must exist a point $v_j\in V'_j$ such that 
$$d(c(v_j),v_j)+d(v_j,c'_j)\leq \frac{1}{z_j}\left(\text{cost}'(V'_j)+\text{cost}(V'_j)\right)$$
Finally, the $z_i$ closest outliers in $Z_j$ to $z_i$ must have average cost at most $\frac{z_i}{z_j}\cdot\text{cost}'(Z_j)$.
Therefore, the cost of adding $z_i$ outliers to our clustering is at most 
\begin{equation*}
\frac{z_i}{z_j}\left(\text{cost}'(V'_j)+\text{cost}(V'_j)+\text{cost}'(Z_j)\right)
\leq \frac{z_i}{z_j}\left(\text{cost}(V'_j)+\left(3+\frac{2\alpha}{5}\right)\frac{1}{x}\OPT\right).
\end{equation*}
Now our goal is to show that for all valid settings of $z_1,\dots,z_x$ and
$\text{cost}(V'_1),\dots,\text{cost}(V'_x)$,
the maximum value of 
$$\min_{i\in [x]}\left(\frac{z_i}{z_j}\left(\text{cost}(V'_j)+\left(3+\frac{2\alpha}{5}\right)\frac{1}{x}\OPT\right)\right)$$
is at most $\left(3+\frac{2\alpha}{5}\right)\frac{1}{x}\OPT+\frac{1}{x}\cdot\OPT$.
Since $\sum_{\ell=1}^x\text{cost}(V'_\ell)\leq \OPT$, and $\prod_{\ell=1}^x\frac{z_i}{z_j}=1$,
we can solve to show the maximum value is when $z_1=\cdots=z_x$ and $\text{cost}(V'_1)=\cdots=\text{cost}(V'_x)=\frac{1}{x}\cdot\OPT$, and the minimum value over all $i\in [x]$ is
$$\left(3+\frac{2\alpha}{5}\right)\frac{1}{x}\OPT+\frac{1}{x}\cdot\OPT$$
Therefore, the total added cost for this clustering is
\begin{equation*}
\left(3+\frac{2\alpha}{5}\right)\frac{1}{x}\OPT+\left(3+\frac{2\alpha}{5}\right)\frac{1}{x}\OPT+\frac{1}{x}\cdot\OPT
\leq \left(7+\frac{4\alpha}{5}\right)\frac{1}{x}\OPT.
\end{equation*}
Since $\alpha>\frac{35}{5x-4}$, it follows that
$\left(7+\frac{4\alpha}{5}\right)\frac{1}{x}\OPT\leq \alpha\cdot\OPT$
Therefore, we have shown there exists a clustering which achieves cost $(1+\alpha)\OPT$ but is $\epsilon$-far from the optimal clustering, causing a contradiction.

\qed

\begin{Frame}[\textbf{Algorithm \ref{alg:aso}} : $k$-median with $z$-outliers under Approximation Stability]
\label{alg:aso}
\ttx{Input}: Clustering instance $(V,d)$, cost $w_{avg}$, value $\CC_{min}$, integer $x>0$.
\begin{enumerate}
    \item \label{step:neighborhood} Create the neighborhood graph on $V$ by running Algorithm \ref{alg:neighborhood} with parameters 
    $\tau=\frac{2 w_{avg}}{5\epsilon}$ and $b=\CC_{min}-(1+\frac{5}{\alpha})\epsilon (n-z)$ as follows:
    for each $u,v\in V$, add an edge $(u,v)$ iff there exist $\geq b$ points $w\in V$ such that $d(u,w),d(w,v)\leq \tau$.   
    Denote the connected components by $\mathcal{X}=\{Q_1,\dots,Q_d\}$.
	\item  For each $Q_i$, compute $\text{cost}_{\min}(Q_i)=\min_{c\in Q_i}\min_{V_c}\sum_{v\in V_c}d(c,v)$, where $V_c$ must satisfy $|V_c|\geq \CC_{min}$ and $Q_i\subseteq V_c$. Create a new set $\mathcal{X}'=\{Q_i\mid \text{cost}_{min}(Q_i)<\left(3+\frac{2\alpha}{5}\right)\frac{1}{x}\cdot\OPT \}$. \label{step:badclique}.
	\item For all $0\leq t\leq x$, for each size $t$ subset $\mathcal{X}'_t\subseteq\mathcal{X}'$ 
and size $\left(k-|\mathcal{X}'|-t\right)$ subset $\mathcal{X}_t\subseteq\left(\mathcal{X}\setminus\mathcal{X}'\right)$, \label{step:for}
\begin{enumerate}
\item Create a new clustering $\mathcal{C}=\mathcal{X}'\cup\mathcal{X}_t\setminus\mathcal{X}'_t$.
\item For each point $v\in V$, define $I(v)$ as the index of the cluster in $\mathcal{C}$ with minimum median distance to $v$,
e.g., $I(v)=\text{argmin}_i \left(d_{\text{med}}(v,Q_i)\right)$ where $d_{\text{med}}(v,Q_i)$ denotes the median distance from $v$ to $Q_i$.
\item Let $V'\subseteq V$ denote the $n-z$ points with the smallest values of $d(v,c_{I(v)})$. For all $i$, set $Q_i'=\{v\in V'\mid I(v)=i\}$.
\item If $\sum_i\text{cost}(Q_i')\leq (1+\alpha)\OPT$, return $\{Q_1,\dots,Q_k\}$.
\end{enumerate}
\end{enumerate}
\end{Frame}

From Lemma \ref{lem:badclique},
we show a threshold of $\text{cost}_{\min}$ for the components of $\mathcal{X}$,
such that all but $x$ optimal clusters are below the cost threshold, and all but $x$ outlier clusters are above the cost threshold.
Then we can brute force over all ways of excluding $x$ low-cost sets and including $x$ high-cost sets, and we will be guaranteed that
one combination contains a clustering which is $O\left(\frac{\epsilon}{\alpha}\right)$-close to the optimal.

However, we still need to recognize the right clustering when we see it.
To do this, we show that after performing one more cluster purifying step which is 
inspired by arguments in \cite{as}
 - reassigning all points to the component with the minimum median distance - we will reduce our error to $\epsilon(n-z)$ in Hamming distance
and we show how to bound the total cost of these mistakes by $\frac{4\alpha}{5}\OPT$.
Therefore, during the brute force enumeration, when we arrive at a clustering with cost at most $(1+\alpha)\OPT$, we return this clustering.
By definition of approximation stability, this clustering must be $\epsilon$-close to $\OPT$.

Since we are able to recognize the correct clustering (the one whose cost is at most $(1+\alpha)\OPT$), we can
try all possible values of $\CC_{min}$ while only incurring a polynomial increase in the runtime of the algorithm.
For computing $w_{avg}$, we first run an approximation algorithm for $k$-median with $z$ outliers to obtain a constant approximation to $w_{avg}$
(for example, we can use the 7.08-approximation for $k$-median with $z$ outliers \cite{krishnaswamy2017constant}).
The situation is much like the case where $w_{avg}$ is known, but the constant in the minimum allowed optimal cluster size increases by a factor of 7.
This is because we need to use a smaller value of $\tau$ when constructing the neighborhood graph $G'$, and so the number of ``bad'' points increases.
In order to show all the good connected components from $G'$ contain a majority of good points, we merely increase the bound on the minimum cluster size.

\vspace{0.1in}
\noindent \textit{Proof of Theorem \ref{thm:aso}.}
We start with the case where $w_{avg}$ and $\CC_{min}$ are known.
First, we show that after step \ref{step:neighborhood} of Algorithm \ref{alg:aso},
the set $\mathcal{X}$ contains $k$ clusters $C_i'$ such that $\{C_1',\dots,C_k'\}$ is $\left(1+\frac{5}{\alpha}\right)\epsilon(n-z)$-close to $\OPT$.

For each optimal cluster $\CC_i$, we define good points $X_i\subseteq \CC_i$ as follows:
a point $v\in X_i$ is good if it is not in the bad case of properties 1 (setting $y=5$) and 2
from Lemma \ref{lem:struct}. 
Then there are at most $\left(1+\frac{5}{\alpha}\right)\epsilon(n-z)$ bad points, and at most $\epsilon (n-z)$ of the bad points are in $Z$.
Recall the conditions from the threshold graph $G_\tau$:
\emph{(1)} For all $i$, for all $u,v\in X_i$, $(u,v)\in E(G_{\tau})$. 
\emph{(2)} For $u\in X_i$ and $v\in X_{j\neq i}$, $(u,v)\notin E(G_{\tau})$,
furthermore, these points do not share any common neighbors in $G_{\tau}$.
Therefore, each $X_i$ is a clique in $G_{\tau}$, with no common neighbors to the other cliques.

From Lemma \ref{lem:struct}, we also have that at most $\epsilon (n-z)$ total outliers have a neighbor to any good point. 
Call these the ``bad outliers''.
This implies that at most $\epsilon(n-z)$ outliers share 
$\geq \CC_{min}-\left(1+\frac{5}{\alpha}\right)\epsilon(n-z)$ neighbors
with a good point: the only common neighbors can be bad points and bad outliers,
which is $<\left(1+\frac{5}{\alpha}\right)\epsilon(n-z)$.
It follows that for all $i$, there is a component $C_i'$ in $G'$ which is close to $\CC_i$, formally,
the set of clusters $\{C_1',\dots,C_k'\}$ is 
$\left(1+\frac{5}{\alpha}\right)\epsilon(n-z)$-close to $\OPT$, 
where the error comes from bad points and bad outliers.
Note that every erroneous point is still at most
$\frac{2\alpha w_{avg}}{5\epsilon}$ from its center.
Then we have 
\begin{align*}
\text{cost}(\{C_1',\dots,C_k'\})&\leq \OPT+\frac{2\alpha w_{avg}}{5\epsilon}\left(1+\frac{5}{\alpha}\right)\epsilon(n-z)\\
&\leq \left(3+\frac{2\alpha}{5}\right)\OPT.
\end{align*}
By a Markov inequality, at most $x$ clusters in $\{C_1',\dots,C_k'\}$ have cost greater than $\left(3+\frac{2\alpha}{5}\right)\frac{1}{x}\OPT$.
The rest of the graph $G'$ consists of outliers and up to $\left(1+\frac{5}{\alpha}\right)\epsilon(n-z)$ bad points from $V\setminus Z$ 
which can make up small or large components.
From Lemma \ref{lem:badclique}, at most $x$ of these components have cost less than or equal to $\left(3+\frac{2\alpha}{5}\right)\frac{1}{x}\OPT$.
Therefore, after step \ref{step:badclique} of Algorithm \ref{alg:aso}, $\mathcal{X}'$ contains at least $k-x$ good clusters.
Then there exists a step of the for loop in step \ref{step:for} such that $\mathcal{C}=\{C_1',\dots,C_k'\}$.
We will show that the algorithm returns a clustering that is $\epsilon$-close to $\OPT$.

Consider the step of the for loop such that $\mathcal{C}=\{C_1',\dots,C_k'\}$.
We show how step \ref{step:for} of Algorithm \ref{alg:aso}
brings the error down from $\left(1+\frac{5}{\alpha}\right)\epsilon(n-z)$ to $\epsilon(n-z)$.
Consider a point $v\in V\setminus Z$
which is not in the bad case of Property 2 of 
Lemma \ref{lem:struct}, specifically, $v$ is in an optimal cluster $\CC_i$ such that for all $j\neq i$,
we have $d(v,c_j)-d(v,c_i)>\frac{\alpha w_{avg}}{\epsilon}$.
Given good points $x\in X_i$ and $y\in X_{j\neq i}$,
we have 
\begin{align*}
d(v,x)&\leq d(v,c_i)+d(c_i,x)\\
&\leq d(v,c_j)-\frac{\alpha w_{avg}}{\epsilon}+\frac{\alpha w_{avg}}{5\epsilon}\\
&\leq d(v,y)-d(y,c_j)-\frac{4\alpha w_{avg}}{5\epsilon}\\
&\leq d(v,y)-\frac{3\alpha w_{avg}}{5\epsilon}.
\end{align*}
Since there are fewer than $\left(1+\frac{5}{\alpha}\right)\epsilon(n-z)$
total errors in $\{C_1',\dots,C_k'\}$, and for all $i$, 
$|C_i|>2\left(1+\frac{5}{\alpha}\right)\epsilon(n-z)$,
it follows that the majority of points in $C_i'$ are good points.
Therefore, for all $j\neq i$, we have $d_{\text{med}}(v,C_i')+\frac{3\alpha w_{avg}}{5\epsilon}
<d_{\text{med}}(v,C_j')$ (recall that $d_{\text{med}}$ denotes the median distance from $v$ to $Q_i$).

If we look at all points in $V\setminus Z$, the clustering created using $I(v)$ will have $\epsilon(n-z)$ errors.
Whenever a point is misclustered, e.g., a point $v\in \CC_i$ is put into cluster $\CC_j$, we must have $d(v,c_j)<d(v,c_i)+\frac{2 w_{avg}}{5\epsilon}$,
so the additive increase in cost to the clustering is at most $\frac{2\alpha w_{avg}}{5}$.
It is possible that some outlier points $z\in Z$ will have a smaller value of $d_{\text{med}}(z,c_{I(z)})$ than a point $v\in V\setminus Z$,
but this can only happen for $\epsilon(n-z)$ pairs $(z,v)$ due to Lemma \ref{lem:struct}.
Again, this type of mistake can only add $\frac{2\alpha w_{avg}}{5}$ to the total cost of the clustering, since $d(z,c(v))<d(v,c(v))$.
Therefore, we have 
\begin{align*}
\text{cost}(Q_1',\dots Q_k')&\leq \OPT+\frac{2 w_{avg}}{5}\cdot 2\epsilon(n-z)\\
&\leq (1+\alpha)\OPT.
\end{align*}
By definition of approximation stability, this clustering must be $\epsilon$-close to $\OPT$.

Now we move to the case where $w_{avg}$ and $\CC_{min}$ are not known.
For $w_{avg}$, we run an approximation algorithm for $k$-median with $z$-outliers to
obtain a constant approximation to $w_{avg}$ 
(for example, there is a recent 7.08-approximation for $k$-median with $z$ outliers \cite{krishnaswamy2017constant}).
The situation is much like the case where $w_{avg}$ is known,
but the constant in the minimum allowed optimal cluster size increases by a factor of 7. 
The algorithm proceeds the same way as before.
If $\CC_{min}$ is not known, we can run the algorithm for $\hat C=n,n-1,n-2$, etc., until step \ref{step:for}
returns a clustering with cost $\leq (1+\alpha)w_{avg}(n-z)$, at which point we are guaranteed that
the clustering is $\epsilon$-close to $\OPT$.
Step 3 
searches through at most $x\cdot {k \choose x}\cdot {n \choose x}$ tuples,
and all other steps in Algorithm \ref{alg:aso} are polynomial in $n$. This completes the proof.

\qed

\section{Distributed Approximation Stability without Outliers} \label{sec:as}
In this section, we give the first distributed algorithms for approximation stability when there are no outliers.
We present two algorithms that use $\widetilde{O}(sk)$ communication to output near-optimal clusterings of the input points.
The first theorem outputs an $O\left(\left(1+\frac{1}{\alpha}\right)\epsilon\right)$-close clustering with no assumptions other than approximation stability,
and the next theorem outputs an $O(\epsilon)$-close clustering assuming the optimal clusters are large. The lower bounds presented
in Section \ref{sec:lowerbounds} imply that the algorithms are communication optimal. 

\begin{Frame}[\textbf{Algorithm \ref{alg:iterative_greedy}} : Iterative Greedy Procedure]
\label{alg:iterative_greedy}
\ttx{Input}: Set of points $V$, parameters $\tau$, $k$
\begin{enumerate}
    \item Create the threshold graph $G_\tau=(V,E)$ by adding edge $(u,v)$ iff $d(u,v)\leq\tau$.
    \item Initialize $A=\emptyset$, $V'=V$. For all $v$, let $N(v)=\{u\mid (u,v)\in E\}$.
    \item While $|A|<k$,  \label{step:mii}
    set $v'=\text{argmax}_{v\in V'}N(v)\cap V'$, and define $C(v')=N(v')\cap V'$.
    \begin{enumerate}
        \item Add $(v',C(v'))$ to $A$, and remove $N(v')$ from $V'$.
    \end{enumerate}
\end{enumerate}
\ttx{Output:}  Center and cluster pairs $A=\{(v_1,C(v_1)),\dots,(v_k,C(v_k))\}$
\end{Frame}

\begin{theorem} \label{thm:distr_as}
Given a $(1+\alpha,\epsilon)$-approximation stable clustering instance, with high probability, Algorithm $\ref{alg:distr_as}$ outputs a clustering that is $O\left(\epsilon\left(1+\frac{1}{\alpha}\right)\right)$-close to $\OPT$ for $k$-median under 
$(1+\alpha,\epsilon)$-approximation stability with $\widetilde{O}(sk)$ communication.
\end{theorem}

We achieve a similar result for $k$-means. 
We also show that if the optimal clusters are large, the error of the outputted clustering can be pushed even lower.

\begin{theorem} \label{thm:distr_as_large}
There exists an algorithm which outputs a clustering that is 
$O(\epsilon)$-close to $\OPT$ for $k$-median under 
$(1+\alpha,\epsilon)$-approximation stability with $O(sk\log n)$ communication
if each optimal cluster $\CC_i$ has size $\Omega\left(\left(1+\frac{1}{\alpha}\right)\epsilon n\right)$.
\end{theorem}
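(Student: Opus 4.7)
The plan is to add a refinement phase to the algorithm behind Theorem~\ref{thm:distr_as} that uses the large cluster assumption to push the error from $O(\epsilon(1+1/\alpha))$ down to $O(\epsilon)$. First, I would apply Theorem~\ref{thm:distr_as} to obtain centers $x_1,\dots,x_k$ inducing a clustering $X_1,\dots,X_k$ that is $O(\epsilon(1+1/\alpha))$-close to $\OPT$, using $O(mk(d+\log n))$ communication. After relabeling, the permutation witnessing closeness is the identity. Combining the closeness bound with $|C_i|=\Omega(\epsilon n(1+1/\alpha))$ shows that $X_i$ and $C_i$ overlap on a $(1-o(1))$-fraction of $C_i$, so cluster identities are unambiguous and the correspondence is fixed for the rest of the protocol.

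In the refinement phase, the coordinator broadcasts $x_1,\dots,x_k$ back to every machine ($O(mkd)$ communication), and each machine~$j$ computes a local candidate refined center $y_{j,i}$ for each cluster~$i$, e.g.\ the $1$-median of $V_j\cap X_i$ (and $\bot$ if this set is too small). Each machine sends its $k$ candidates to the coordinator, the coordinator broadcasts all $mk$ candidates to every machine which accounts for the $\Theta(m^2 kd)$ term, and each machine reports aggregated local costs per candidate using $O(\log n)$-bit counters, contributing $\Theta(mk\log n)$ total. The coordinator picks, for each cluster~$i$, the candidate with smallest total cost as the refined center $x_i^{\mathrm{new}}$.

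For correctness I would invoke the structural characterization of $(1+\alpha,\epsilon)$-approximation stability from Balcan et al. Let $d_c:=\alpha\OPT/(\epsilon n)$ and call $v\in C_i$ \emph{good} if $d(v,c_i)\leq d_c$ and $d(v,c_j)\geq 5d_c$ for every $j\neq i$. The number of bad points is $O(\epsilon n(1+1/\alpha))$, which under the large cluster assumption is an $o(1)$-fraction of every $C_i$. The Voronoi partition induced by any centers within $d_c$ of the true centers classifies every good point correctly, so such a set of centers yields an $O(\epsilon)$-close clustering. It therefore suffices to show $d(x_i^{\mathrm{new}},c_i)=O(d_c)$ for each $i$: since Phase~1 places almost every good point into its correct predicted cluster, any sufficiently populated $V_j\cap X_i$ consists mostly of good points near $c_i$, and the $1$-median's robustness to a small fraction of outliers implies the corresponding $y_{j,i}$ is close to $c_i$; the cost-minimization step then recovers an at-least-as-good center.

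The hardest part will be guaranteeing that for every $i$, at least one local candidate $y_{j,i}$ is within $O(d_c)$ of $c_i$, despite an adversarial partition of $V$ across machines. A machine holding only a small or skewed portion of $C_i$ could produce a misleading median. I would handle this via a threshold rule: machines with fewer than a constant fraction of $|C_i|/m$ cluster-$i$ points output $\bot$ and are excluded from the aggregation. A pigeonhole argument using $|C_i|=\Omega(\epsilon n(1+1/\alpha))$ ensures at least one machine passes the threshold and, by the robustness of the $1$-median to a small fraction of misclassified points, produces a candidate within $O(d_c)$ of $c_i$. The cost-minimization step at the coordinator then yields refined centers satisfying the required closeness condition, completing the proof.
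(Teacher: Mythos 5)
Your high-level skeleton matches the paper's (run Algorithm~\ref{alg:distr_as}, then a local refinement phase that exploits the large-cluster assumption), but the refinement phase you propose is genuinely different, and it has real gaps that the paper's choice avoids. The paper broadcasts the \emph{weighted proxy clusters} $X'=\{(x_j,C(x_j))\}_j$ (a subset of the $mk$-point set $A$ with cluster labels) to every machine, and each machine reassigns each of its own points $v$ to the index $j$ minimizing the \emph{median} of $d(v,u)$ over weighted $u\in C(x_j)$. Since the large-cluster assumption guarantees that a majority of the \emph{global} weight in each $C(x_j)$ lies within $2t$ of $H_j$, the median distance is sandwiched between $d(v,c_j)\pm 3t$ regardless of how $V$ is partitioned across machines, and every point satisfying the good case of Property~2 (of which there are all but $6\epsilon n$) is reassigned correctly. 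The argument never needs any local machine to ``see enough'' of any cluster.

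Your approach instead tries to recover refined centers $x_i^{\mathrm{new}}$ and then uses their Voronoi partition. Two concrete problems. First, you claim ``the Voronoi partition induced by any centers within $d_c$ of the true centers classifies every good point correctly, so such a set of centers yields an $O(\epsilon)$-close clustering'' immediately after stating the number of non-good points is $O(\epsilon n(1+1/\alpha))$; as written this only yields $O(\epsilon(1+1/\alpha))$-closeness, not $O(\epsilon)$. To get $O(\epsilon)$ you must decouple the two properties: use Property~1 (nearness to $c_i$) only to control the refined centers, and then argue separately that the Voronoi cells correctly classify all points satisfying Property~2, whose complement has size $<6\epsilon n$; this is exactly the distinction the paper's proof makes. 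Second, and more seriously, the claim that the $1$-median of $V_j\cap X_i$ lands within $O(d_c)$ of $c_i$ does not follow from the ``robustness of the $1$-median to a small fraction of outliers.'' The misclassified points in $V_j\cap X_i$ lie in the Voronoi cell of $x_i$ but can be at arbitrarily large distance from $c_i$ (up to the diameter of $V$), and the geometric median of a set with a $\delta$-fraction of outliers at distance $D$ can be pulled $\Theta(\delta D)$ away. Your pigeonhole does give a machine on which the outlier fraction is bounded below $1/2$, but without a bound on $D$ that does not bound $\|y_{j,i}-c_i\|$. The paper's per-point median distance to the proxy cluster sidesteps this entirely because a median of \emph{distances} (order statistic) only requires a majority of the reference set to be near $c_j$; no point estimate that can be dragged by far-away outliers ever enters the argument. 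Finally, a minor accounting issue: having every machine report a cost for each of the $mk$ candidates costs $O(m^2 k\log n)$, not the $O(mk\log n)$ you state, which exceeds the theorem's budget when $d=o(\log n)$.
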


First we explain the intuition behind Theorem \ref{thm:distr_as}.
The high level structure of the algorithm can be thought of as a two-round version of
Algorithm \ref{alg:iterative_greedy}: first each machine clusters its local point set using
Algorithm \ref{alg:iterative_greedy}, and sends the weighted centers to the coordinator.
The coordinator runs Algorithm \ref{alg:iterative_greedy} on the weighted centers, using a higher threshold
value, to output the final solution.

\begin{Frame}[\textbf{Algorithm \ref{alg:distr_as}} : Distributed $k$-median clustering under 
$(1+\alpha,\epsilon)$-approximation stability]
\label{alg:distr_as}
\ttx{Input}: Distributed points $V=V_1\cup\cdots\cup V_m$, average $k$-median cost $w_{avg}$
\begin{enumerate}
    \item For each machine $i$,
        \begin{itemize}
            \item \label{step:mi} Run Algorithm \ref{alg:iterative_greedy} with $\tau=\frac{\alpha w_{avg}}{9\epsilon}$ , 
            outputting $A'_i=\{(v_1^i,C(v_1^i)),\dots,(v_k^i,C(v_k^i))\}$.
            \item Send $A_i=\{(v_1^i,|C(v_1^i)|),\dots,(v_k^i,|C(v_k^i)|)\}$ to the coordinator. 
        \end{itemize}
    \item Given the set of weighted points received, $A=\cup_i A_i$,
    the coordinator runs  Algorithm \ref{alg:iterative_greedy} with graph $\tau'=3\tau$ and the weighted points $A$, outputting $$G'=\left\{(x_1,C(x_1)),\dots,(x_k,C(x_k))\right\}$$ \label{step:m1}
\end{enumerate}
\vspace{-0.3in}
\ttx{Output:} Centers $G=\{x_1,\dots,x_k\}$
\end{Frame}

\begin{lemma} \label{lem:greedy} \cite{as}
\footnote{This lemma is obtained by merging Lemma 3.6 and Theorem 3.9 from \cite{as}.}
Given a graph $G$ over good clusters $G_1,\dots G_k$ and bad points $B$, with the following properties:
\begin{enumerate}
\item For all $u,v \in G_i$, edge $(u,v)$ is in $E(G)$.
\item For $u\in G_i$, $v\in G_j$ such that $i\neq j$, then $(u,v)\notin E(G)$, moreover,
$u$ and $v$ do not share a common neighbor in $G$.
\end{enumerate}
Then let $C(v_1),\dots,C(v_k)$ denote the output of running Algorithm \ref{alg:iterative_greedy} on $G$
with parameter $k$. There exists a bijection
$\sigma:[k]\rightarrow[k]$ between the clusters $C(v_i)$ and $G_j$ such that
$\sum_i |G_{\sigma(i)}\setminus C(v_i)|\leq 3|B|$.
\end{lemma}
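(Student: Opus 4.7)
The plan is to analyze Algorithm \ref{alg:iterative_greedy} iteration by iteration, use the two graph-theoretic conditions to pin down how the greedy choices interact with each cluster $X_i$, and then match outputs to true clusters via a bijection whose error can be charged against bad points.

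First, I would record a key consequence of condition (2): for any vertex $w \in V$, all of $w$'s good neighbors lie in a single cluster $X_i$. If instead $w$ had neighbors $u \in X_i$ and $u' \in X_{i'}$ with $i \neq i'$, then $u$ and $u'$ would share the common neighbor $w$, contradicting condition (2). Call this unique $i$ the \emph{target} of $w$; by condition (1), the target of a good $w \in X_i$ is $i$ itself. An immediate corollary is that a point of $X_i$ can be removed from $V'$ only during an iteration whose chosen center has target $i$, since good-picking centers in $X_{i'}$ have no $X_i$-neighbors by condition (2), and bad-picking centers with target $i' \neq i$ have all their good neighbors in $X_{i'}$.

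Next, I would classify each iteration $j$ as \emph{good} or \emph{bad} according to whether $v_j \in \bigcup_i X_i$. By condition (1), a good iteration with $v_j \in X_i$ has $N(v_j) \cap V' \supseteq X_i \cap V'$, so $C(v_j)$ absorbs every point of $X_i$ still present; hence no later iteration can be a good iteration in the same $X_i$, and good iterations capture pairwise distinct clusters. I would then define the bijection $\sigma$ by matching each good iteration $v_j \in X_i$ to $\sigma(j) = i$, and matching the bad iterations to the uncaptured clusters, always preferring to send a bad iteration to its own target when that cluster is still unclaimed. Since the number of good iterations equals the number of captured clusters, this bijection is well defined.

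Finally, I would bound $\sum_j |X_{\sigma(j)} \setminus C(v_j)|$ via charging. For a cluster $X_i$ captured by a good iteration $v_j$, every point in $X_i \setminus C(v_j)$ was removed in an earlier bad iteration targeting $i$. For a cluster $X_{i^*}$ matched to a bad iteration $j$, every point in $X_{i^*} \setminus C(v_j)$ was either removed in a different bad iteration targeting $i^*$, or never removed at all during the $k$ rounds. The max-degree rule guarantees $|C(v_j)| \geq |X \cap V'|$ for the largest remaining cluster $X$, which forces any bad iteration $v_j$ to contribute a substantial number of bad points: the entire set $C(v_j) \setminus X_{\text{target}(v_j)}$ is bad, and $v_j$ itself is bad. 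Distributing every missing good point to either the bad center responsible for its removal or the bad points inhabiting the same $C(v_j)$ will show that each element of $B$ absorbs at most three units of misclassification, producing $\sum_j |X_{\sigma(j)} \setminus C(v_j)| \leq 3|B|$.

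The main obstacle is this last step: verifying the constant $3$ with a single coherent choice of $\sigma$. One must simultaneously handle (i) captured clusters from which earlier bad iterations siphoned off points, (ii) uncaptured clusters whose matched bad iteration covers only a strict subset of their points, and (iii) good points that survive in $V'$ after the algorithm terminates. Keeping the charge-per-bad-point bounded by three across these three overlapping failure modes, while respecting the preferential matching between bad iterations and their targets, is where the bookkeeping has to be done carefully.
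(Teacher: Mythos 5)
Your preliminary observations are all correct: condition (2) gives each vertex a unique target cluster containing all of its good neighbors; a good center $v_j\in X_i$ absorbs $X_i\cap V'$, so good iterations land in pairwise-distinct clusters; and the max-degree rule guarantees $|C(v_j)|\ge\max_i|X_i\cap V'|$. But you never actually carry out the charging, and you say so explicitly --- you list the three failure modes and state that keeping the per-bad-point charge at three ``is where the bookkeeping has to be done carefully.'' Since the entire content of the lemma is that this bookkeeping works out to exactly $3|B|$, the proof as written is incomplete.

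You also chose a decomposition that makes the accounting harder than it needs to be. You classify iterations by whether $v_j$ itself is a good point, and you match a good $v_j\in X_i$ to $X_i$; consequently a bad center $v_{j'}$ whose cluster is the \emph{first} to touch $X_i$ may get matched elsewhere. The paper instead marks each $X_i$ the first time some $C(v_j)$ intersects it, and matches that iteration to $X_i$, regardless of whether $v_j$ is good or bad. With that bijection the error splits cleanly: (a) for an iteration matched to the clique $X_i$ it first touches, $r_j := |X_i\setminus C(v_j)|$ satisfies $|C(v_j)\cap B|\ge r_j$ by the max-degree rule, so $\sum_j r_j\le|B|$; and (b) every remaining iteration is matched to a still-untouched clique $X_{i'}$, whence $|X_{i'}\setminus C(v_j)|=|X_{i'}|\le|C(v_j)|$, and the union of these $C(v_j)$'s consists only of leftover good points (at most $\sum_j r_j\le|B|$) plus bad points (at most $|B|$), contributing $2|B|$ more. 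Under your classification, by contrast, a single bad center's cluster $C(v_j)$ must simultaneously pay for the good points it siphoned away from a later-captured cluster and for the full size of whichever uncaptured cluster it is matched to, all charged against $|C(v_j)\cap B|$; verifying that this never exceeds a factor of three is the step you leave open, and it is exactly the step that proves the lemma.
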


\begin{proof} 
From the first assumption, each good cluster $G_i$ is a clique in $G$. Initially,
let each clique $G_i$ be ``unmarked'', and then we ``mark'' it the first time the
algorithm picks a $C(v_j)$ that intersects $G_i$. A cluster $C(v_j)$ can
intersect at most one $G_i$ because of the second assumption.
During the algorithm, there will be two cases to consider. 
If the cluster $C(v_j)$ intersects an unmarked clique $G_i$,
then set $\sigma(j)=i$.
Denote $|G_i\setminus C(V_j)|=r_j$. Since the algorithm chose the maximum degree node
and $G_i$ is a clique, then there must be at least $r_j$ points from $B$ in $C(V_j)$.
So for all cliques $G_i$ corresponding to the first case, we have
$\sum_j |G_{\sigma(j)}\setminus C(v_j)|\leq\sum_j r_j\leq |B|$.

If the cluster $C(v_j)$ intersects a marked clique, then assign $\sigma(j)$ to an
arbitrary $G_{i'}$ that is not marked by the end of the algorithm.
The total number of points in all such $C(v_j)$'s is at most the number of points
remaining from the marked cliques, which we previously bounded by $|B|$, plus up
to $|B|$ more points from the bad points.
Because the algorithm chose the highest degree nodes in each step,
each $G_{i'}$ has size at most the size of its corresponding $C(v_j)$. 
Therefore, for all cliques $G_{i'}$ corresponding to the second case, we have
$\sum_j |G_{\sigma(j)}\setminus C(v_j)|\leq\sum_j |G_{\sigma(j)}|\leq 2|B|$.
Thus, over both cases, we reach a total error of $3|B|$.
\end{proof}

Our proofs crucially use the structure outlined in Lemma \ref{lem:struct},
as well as properties \emph{(1)} and \emph{(2)} about the threshold graph $G_{\tau}$
from Section \ref{sec:aso}.

\begin{proof} \textbf{(Theorem \ref{thm:distr_as})}
The proof is split into two parts, both of which utilize Lemma \ref{lem:greedy}. 
First, given machine $i$ and $1\leq j\leq k$, let $G_j^i$ denote
the set of good points from cluster $\CC_j$ on machine $i$. Let $B_i$ denote the set of bad points on machine
$i$. 
Given $u,v\in G_j^i$, $d(u,v)\leq d(u,c_j)+d(c_j,v)\leq 2t$, so $G_j^i$ is a clique in $G_{2t}^i$.
Given $u\in G_j^i$ and $v\in G_{j'}^i$ such that $j\neq j'$, then
\begin{align*}
d(u,v)>d(u,c_{j'})-d(c_{j'},v)\geq 18t-d(u,c_j)-d(c_{j'},v)>16t.
\end{align*}
Therefore, if $u$ and $v$ had a common neighbor $w$ in $G_{2t}^i$, $$16t<d(u,v)\leq d(u,w)+d(v,w)\leq 4t$$
causing a contradiction.
Since $G_{2t}^i$ satisfies the conditions of Lemma \ref{lem:greedy}, it follows that there
exists a bijection $\sigma:[k]\rightarrow[k]$ 
between the clusters $C(v_j)$ and the good clusters $G_\ell$ such that
$\sum_j |G_{\sigma(j)}^i\setminus C(v_j)|\leq 3|B_i|$.
Therefore, all but $3|B_i|$ good points on machine $i$ are within $2t$ of some point in $A_i$.
Across all machines, $\sum_i |B_i|\leq |B|$, so there are less than $4|B|$ good points which are not
distance $2t$ to some point in $A$.

Since two points $u\in G_i$, $v\in G_j$ for $i\neq j$ are distance $>16t$, then each point in $A$
is distance $\leq 2t$ from good points in at most one set $G_i$. Then we can partition $A$ into sets
$G_1^A,\dots, G_k^A,B'$, such that for each point $u\in G_i^A$, there exists a point $v\in G_i$ such that
$d(u,v)\leq 2t$. The set $B'$ consists of points which are not $2t$ from any good point.
From the previous paragraph, $|B'|\leq 3|B|$, where $|B'|$ denotes the sum of the weights of all points in $B'$.
Now, given $u,v\in G_i^A$, there exist $u',v'\in G_i$ such that $d(u,u')\leq 2t$ and $d(v,v')\leq 2t$, 
and $d(u,v)\leq d(u,u')+d(u'c_i)+d(c_i,v')+d(v',v)\leq 6t$
\begin{equation*}
\begin{split}
d(u,v)\leq d(u,u')+d(u'c_i)+d(c_i,v')+d(v',v)\leq 6t
\end{split}
\end{equation*}
Given $u\in G_i^A$ and $w\in G_j^A$ for $i\neq j$, there exist $u'\in G_i$, $w'\in G_j$ such that
$d(u,u')\leq 2t$ and $d(w,w')\leq 2t$.
\begin{equation*}
\begin{split}
d(u,w) & \geq d(u',c_j)-d(u,u')-d(c_j,w')-d(w,w')\\
& >(18t-d(u,c_i))-2t-t-2t\\ 
& \geq 12t.
\end{split}
\end{equation*}

Therefore, if $u$ and $w$ had a common neighbor $w$ in $G_{6t}$, then $12t<d(u,v)\leq d(u,w)+d(v,w)\leq 12t$,
causing a contradiction.
Since $G_{6t}$ satisfies the conditions of Lemma \ref{lem:greedy} it follows that there
exists a bijection $\sigma:[k]\rightarrow[k]$ 
between the clusters $C(v_i)$ and the good clusters $G_j^A$ such that
$\sum_j |G_{\sigma(j)}^A\setminus C(v_j)|\leq 3|B'|$.
Recall the centers chosen by the algorithm are labeled as the set $G$. Let $x_i\in G$ denote the center
for the cluster $G_i$ according to $\sigma$.
Then all but $3|B'|$ good points $u\in G_i$ are distance $2t$ to a point in $A$ which is distance $6t$
to $x_i$. $u$ must be distance $>8t$ to all other points in $G$ because they are distance $2t$ from good
points in other clusters.
Therefore, all but $3|B'|\leq 12|B|$ good points are correctly clustered.
The total error over good and bad points is then $12|B|+|B|=13|B|\leq (48+\frac{468}{\alpha})\epsilon n$
so the algorithm achieves error $O(\epsilon(1+\frac{1}{\alpha}))$.
There are $sk$ points communicated to the coordinator, the weights can be represented by $O(\log n)$ bits, so the total communication
is $\widetilde{O}(sk)$. This completes the proof for $k$-median when the
algorithm knows $w_{avg}$ up front.

When Algorithm \ref{alg:distr_as} does not know $w_{avg}$, then it first runs 
a worst-case approximation algorithm to obtain an estimate $\hat w\in [w_{avg},\beta w_{avg}]$
for $\beta\in O(1)$. 
Now we reset $t$ in Algorithm \ref{alg:distr_as} to be 
$\hat t=\frac{\alpha\beta w_{avg}}{18\epsilon}$.
Then the set of bad points grows by a factor of $\beta$, but the same analysis still holds,
in particular, Lemma \ref{lem:greedy} and the above paragraphs go through, adding a factor
of $\beta$ to the error and only increases communication by a constant factor.

\end{proof}

The key ideas behind the proof of Theorem \ref{thm:distr_as_large} are as follows.
First, we run Algorithm \ref{alg:distr_as} to output a clustering with error $O\left(\left(1+\frac{1}{\alpha}\right)\epsilon\right)$.
To ensure $O(\epsilon)$ error when further assuming the optimal clusters are large,
we can use a technique similar to the one in the previous section: for each unassigned point $v$, 
assign this point to the cluster with the minimum median distance to $v$.
The key challenge is to run this technique without using too much communication, since we cannot send the entire set $A$ (which is size $\Theta(sk)$) to each machine.
To reduce the communication complexity, we instead randomly sample $\Theta\left(\frac{\log k}{\epsilon'}\right)$ points from $A$ and
send each to machine $i$, incurring a communication cost of $O\left(\frac{s\log(k)}{\epsilon'}\right)$. Note, the $\epsilon'$ is not the stability parameter, but used to obtain a point that is a $1+\epsilon'$ approximation to center of each cluster. 
Now each point $v\in V$ calculates the index of the cluster with the minimum median distance to $v$, over the sample.
Using a Chernoff bound, we show that for each point $v$ and each cluster $C_i$, the median of the sampled points must come from the core of $\CC_i$,
ensuring that $v$ is correctly classified.

\begin{proof} \textbf{(Theorem \ref{thm:distr_as_large})}
The algorithm is as follows. First, run Algorithm \ref{alg:distr_as}.
Then send $G'$ to each machine $i$, incurring a communication cost of $O(sk)$.
For each machine $i$, for every point $v\in V_i$, calculate the median distance from
$v$ to each cluster $C(x_j)$ (using the weights). Assign $v$ to the index $j$ with
the minimum median distance. Once every point undergoes this procedure, 
call the new clusters $G_1,\dots,G_k$, where $G_j$ consists of all points
assigned to index $j$. Now we will prove the clustering $\{G_1,\dots,G_k\}$
is $O(\epsilon)$-close to the optimal clustering. Specifically, we will show that
all are classified correctly except for the $6\epsilon n$ points in the bad case of Property 2 from Lemma \ref{lem:struct}.

Assume each cluster $C(x_j)$ contains
a majority of points that are $2t$ to a point in $G_j$ (we will prove this at the end).
Given a point $v\in C_j$ such that 
$d(v,c_i)-d(v,c_j)>\frac{\alpha w_{avg}}{2\epsilon}$ for all $c_i\neq c_j$ (Property 2 from Lemma \ref{lem:struct}),
and given a point $u\in C(x_j)$ that is at distance $2t$ to a point $u'\in G_j$, then
$d(v,u)\leq d(v,c_j)+d(c_j,u')+d(u',u)\leq d(v,c_j)+3t$.
On the other hand, given $u\in C(x_{j'})$ that is at distance $2t$ to a point $u'\in G_{j'}$, then
$d(v,u)\geq d(v,c_{j'})-d(c_{j'},u')-d(u',u)>18t+d(v,c_j)-3t\geq d(v,c_j)+15t$.
Then $v$'s median distance to $C(x_j)$ is $\leq d(v,c_j)+3t$, and $v$'s median distance to any other cluster is $\geq d(v,c_j)+15t$,
so $v$ will be assigned to the correct cluster.

Now we will prove each cluster $C(x_j)$ contains
a majority of points that are $2t$ to a point in $G_j$.
Assume for all $j$, $|C_j|>16|B|$. It follows that for all $j$ $|G_j|>15|B|$.
From the proof of Theorem \ref{thm:distr_as}, we know that $(\sum_j G_j\setminus (\sum_i C(v_j^i)))\leq 3|B|$,
therefore, for all $j$, $G_j^A>12|B|$, since $G_j^A$ represents the points in $A$ which are $2t$ to a point in $G_j$.
Again from the proof of Theorem \ref{thm:distr_as}, the clustering 
$\{G_1^A,\dots,G_k^A\}$ is $9|B|$-close to $G'=\{C(x_1),\dots, C(x_k)\}$.
Then even if $C(x_j)$ is missing $9|B|$ good points,
and contains $3|B|$ bad points, it will still have a majority of points that are within $2t$ of a point in $G_j$.
This completes the proof. 
\end{proof}

\section{Distributed Approximation Stability with Outliers} 
\label{sec:distr}

Next, we give a distributed algorithm for approximation stability with outliers using
$\tilde O\left(sk+z\right)$ communication.
However, as opposed to worst case, we can get close to the ground truth (target) clustering. In Section \ref{sec:lowerbounds}, we show a matching lower bound. 

\begin{theorem}(Distributed Clustering.) \label{thm:distr-aso}
Given a $(1+\alpha,\epsilon)$-approximation stable clustering instance,  Algorithm \ref{alg:distr-aso} runs in poly$\left(n^\frac{1}{\alpha}\right)$ time and with high probability outputs a clustering that is 
$O(\epsilon)$-close to $\OPT$ for $k$-median  
with $\tilde O\left(sk + z\right)$ communication
if each optimal cluster $\CC_i$ has cardinality at least $\max\left\{ 2\left(1+\frac{22}{\alpha}\right)\epsilon(n-z), \Omega\left(\frac{(n-z)}{sk} \right) \right\}$.
\end{theorem}

We start by giving intuition for our algorithm where there are no outliers.
The high-level structure of the algorithm can be thought of as a two-round version of the centralized
algorithm from approximation
stability with no outliers \cite{as}. Each machine effectively creates a coreset of its input, consisting of a weighted set of points,
and sends these weighted points to the coordinator.
The coordinator runs the same algorithm on these sets of weighted centers, to output the final solution.

In the analysis, we define good and bad points using Property $\emph{(1)}$ above with $y=20$ as opposed to $y=5$,
so that there are more bad points than in the non-distributed setting, 
$|B|=\left(1+\frac{1}{20}\right)\epsilon(n-z)$, but for each optimal cluster $\CC_i$,
the good points $G_i$ are even more tightly concentrated.
In the first round, each machine computes the neighborhood graph described above with parameter $\tau=\frac{w_{avg}}{10}$.
This more stringent definition of $\tau$ ensures that Claims \emph{(1)} and \emph{(2)} above are not only true for the input point set,
but also true for a summarized version of the point set, where each point represents a ball of data points within a radius of $\tau$.
Therefore, there is still enough structure present such that the coordinator can compute a near-optimal clustering,
and finally the coordinator sends the $k$ resulting (near optimal) centers to each machine.

Now we expand this approach to the case with outliers.
The starting point of the algorithm is the same: we perform two rounds of the sequential approximation stability algorithm with no outliers,
so that each machine computes a summary of its point set, and the coordinator clusters the points it receives.
Recall that in the centralized setting, running the non-outlier algorithm produces a list of clusters $\mathcal{X}$, some of which
are near-optimal and some of which are outlier clusters, and then we crucially computed the cost$_{\min}$ of each potential cluster
to distinguish the near-optimal clusters from the outlier clusters.
In the distributed setting, we can construct the set $\mathcal{X}$ using the two-round approach. 

However, the cost$_{\min}$ computation is sensitive to small sets of input points, and, as a result, the coresets will not give the 
coordinator enough information to perform this step correctly. 
In particular, this involves 
finding the closest points to a component that increase the cardinality to $\CC_{\min}$, and 
these points may be arbitrarily partitioned across the machines.

\begin{Frame}[\textbf{Algorithm \ref{alg:distr-aso}} : Distributed $k$-median with $z$-outliers under Approximation Stability]
\label{alg:distr-aso}
\ttx{Input}: Clustering instance $(V,d)$, cost $w_{avg}$, value $C_{min}$
\begin{enumerate}
    \item For each machine $i$, run Algorithm \ref{alg:neighborhood} with parameters 
$\tau=\frac{\alpha w_{avg}}{20\epsilon}$ and $b=\frac{\epsilon(n-z)}{s}$.
For each component $Q$ output of size $\geq\frac{\epsilon(n-z)}{s}$, choose an arbitrary point $c\in Q$ and send $(c,|Q|)$ to the coordinator.
    \item Given the set of weighted points received, $A$, run Algorithm \ref{alg:neighborhood} with parameters $\tau'=3\tau$ and 
$b=C_{min}-\left(1+\frac{22}{\alpha}\right)\epsilon(n-z)$. \label{step:aso-gprime}
    \item Label the components output of size $\geq b$ by $Q_1,\dots,Q_d$ and define $\mathcal{X}=\{Q_1,\dots,Q_d\}$.
    \item For each component $Q_i$, approximate $\text{cost}_{min}$ as follows: \label{step:dfor}
        \begin{enumerate}
        \item Sample $10\log n$ points uniformly at random from $Q_i$:
        the coordinator picks each point $(c,w_c)$ with probability proportional
        to its weight. The coordinator sends a request $(c,w_c)$ to the machine
        containing $c$, which then samples a point at random from $c$'s local component,
        sending this point to the coordinator. \label{step:sample}
        \item For each sampled point $c'$, compute $\min t$ such that
        $|B_t(c')|>$ $\max(C_{min},|Q_i|)$ over $V$, using binary search as follows.
        For each guess of $t$, send $(c',t)$ to each machine, and each
        machine returns $|B_t(c')|$ over its local dataset.
        \item For each $(c',t)$ pair computed in the previous step, compute
        $\text{cost}_{min}(c') := $
        
        $\sum_{v\in B_t(c')}d(c',v)$
        by having each machine send $\sum_{v\in B_t(c')\cap V_i}d(c',v)$.
        \end{enumerate}
    \item Create a new set 
$\mathcal{X}'=\{Q_i\mid\text{cost}_{min}(Q_i)<\left(1+\frac{11\alpha}{2}\right)\frac{1}{x}\cdot\OPT$. \label{step:distr-prune}
    \item For all $0\leq t\leq x$, for each size $t$ subset $\mathcal{X}'_t\subseteq\mathcal{X}'$ 
and size $\left(k-|\mathcal{X}'|-t\right)$ subset $\mathcal{X}_t\subseteq\left(\mathcal{X}\setminus\mathcal{X}'\right)$, 
        \begin{enumerate}
        \item Create a new clustering $\mathcal{C}=\mathcal{X}'\cup\mathcal{X}_t\setminus\mathcal{X}'_t$.
        \item For each cluster in $\mathcal{C}$, draw $10\log n$ random points using step \ref{step:sample} above.
        \item For each point $v\in V$, define $I(v)$ as the index of the cluster in $\mathcal{C}$ with minimum median distance 
        from the $10\log n$ points to $v$.
        \item Let $V'\subseteq V$ denote the $n-z$ points with the smallest values of $d(v,c_{I(v)})$,
        each center is restricted to the $10\log n$ random points. 
        For all $i$, set $Q_i'=\{v\in V'\mid I(v)=i\}$.
        \item If $\sum_i\text{cost}(Q_i')\leq (1+\alpha)\OPT$, return $\{Q_1,\dots,Q_k\}$.
        \end{enumerate}
\end{enumerate}
\ttx{Output:}  Connected components of $G'$ 
\end{Frame}

Furthermore, the centralized algorithm can try all possible centers to compute 
the minimum cost of a given component $Q$,
but in the distributed setting, to even find a point whose cost is a constant multiple of the 
minimum cost,
the coordinator needs to simulate random draws from $Q$ by communicating with each machine.
Even with a center $c$ chosen, the coordinator needs a near-exact estimate of the minimum 
cost of $Q$, however, it does not know the $\CC_{\min}$ closest points to $c$. 
To overcome these obstacles, our distributed algorithm balances accuracy with communication.

For each component $Q$, the coordinator simulates $\log n$ random draws from $Q$ by querying 
its own weighted points, and then querying the machine of the corresponding point. 
This allows the coordinator to find a center $c$ whose cost is only a constant factor away from 
the best center.
To compute $\text{cost}_{\min}(c)$, the coordinator runs a binary-search procedure with all 
machines to find the minimum distance $t$ such that $B_t(c)$ contains more than $\CC_{\min}$ 
points.

Given a random point $v$ from $Q$, by a Markov inequality, there is a $1/2$ chance that the cost of center $v$ on $V_c$ is at most
twice the cost with center $c$. From a Chernoff bound, by sampling $10\log n$ points for each component, 
each component will find a good center with high probability.
Therefore, the coordinator can evaluate the cost of each component up to a factor of 2,
which is sufficient to (nearly) distinguish the outlier clusters from the near-optimal clusters.
The rest of the algorithm is similar to the centralized setting.
We brute-force all combinations of removing $x$ low-cost clusters from $\mathcal{X}$ and adding
back $x$ high-cost clusters from $x$. We perform one more cluster purifying step, and then
check the cost of the resulting clustering.
If the cost is smaller than $(1+\alpha)w_{avg}(n-z)$, then we return this clustering.

\begin{proof} [Proof of Theorem \ref{thm:distr-aso}]
First we consider the case when $w_{avg}$ and $C_{min}$ are known.
Given machine $i$, let $\{G_1^i,\dots,G_k^i\}$ denote the good clusters intersected with $V_i$.
Define good points and bad points as in the previous section: 
a point is bad if it is not in the bad case of Property 1 for $y=20$,
or Property 2, otherwise a point is good.
For each $i$, the set of good points in $C_i$ is denoted $X_i$.
Recall from Lemma \ref{lem:struct}
that in the original dataset $V$, for all $i$, the good point set $X_i$ forms a clique in $G_{\tau}$ with
no neighbors in common with any points from different cores, and has at most $\epsilon (n-z)$ neighbors which are outliers. Here, $\tau=\frac{\alpha w_{avg}}{20\epsilon}$.
Therefore, if $|G_j^i|\geq\frac{\epsilon(n-z)}{s}$, it forms a component in $G_j'$ which does not contain core points from any other cluster,
and the total number of outliers added to a core component over all $j$, $i$, is less than $2\epsilon(n-z)$.
If $|G_j^i|<\frac{\epsilon(n-z)}{s}$, the component may be too small to have a point sampled and sent to the coordinator.
Over all machines, the total number of `missed' points from $X_j$ is at most $(s-1)\frac{\epsilon(n-z)}{s}\leq \epsilon(n-z)$.

Now we partition $A$ into sets $G_1^A,\dots, G_k^A,Z^A$,
where $G_j^A$ denotes points which are distance $2\tau$ to good points from $G_i$, and $Z'$ contains points which are far from all good points.
This partition is well-defined because any pair of good points from different clusters are far apart.
From the previous paragraph, for all $j$, the (weighted) size of $G_j^A$ is at least $|X_j|-\epsilon(n-z)\geq |C_j|-21\epsilon(n-z)$.
Again using Lemma \ref{lem:struct}, since each $u\in G_j^A$ was contained in a clique with a core point $u'$,
we have that for two points $u,v\in G_j^A$, there exist $u',v'\in G_j$ such that $$d(u,v)\leq d(u,u')+d(u',c_j)+d(c_j,v')+d(v',v)\leq 6\tau$$
Given $u\in G_j^A$ and $w\in G_{j'}^A$, there exist $u'\in G_j$, $w'\in G_{j'}$ such that $d(u',c_{j'})>18\tau-d(c_j,u')$, 
which we use to show $u$ and $w$ cannot have a common neighbor in $G_{3\tau}$.
Furthermore, at most $\epsilon(n-z)$ points in $Z^A$ can have a neighbor in $G_{3\tau}$ to a point in $G_j^A$, for al $j$.
It follows that for each $j$, $G'$ contains a component $G'_j$ containing $G_j^A$, such that $\{G_1',\dots,G_k'\}$ is
$22\epsilon(n-z)$-close to $\{G_1^A,\dots,G_k^A\}$. Since $|G_j^A|>C_{min}-21\epsilon(n-z)$, all of these components are added to $\mathcal{X}$.

Next, we show that just before step \ref{step:distr-prune}, $\mathcal{X}$ contains at most $x$ component outside of $\{G_1^A,\dots,G_k^A\}$.
From Lemma \ref{lem:badclique}, we know that at most $x$ outlier components of size $<C_{min}$ can have $\text{cost}_{min}$ cost smaller than 
$\left(3+\frac{2\alpha}{5}\right)\frac{1}{x}\OPT$.
The algorithm must determine an approximate $\text{cost}_{min}$ cost of each component in $\mathcal{X}$ whose size is $<C_{min}$, by communicating with each machine.
Given component $Q_i^A\in\mathcal{X}$ of size $<C_{min}$, let $Q_i$ denote the set of points `represented' by $Q_i^A$,
i.e., $Q_i=\{v\mid \exists a\in Q_i^A,j\text{ s.t. }v,a\in V_j\text{ and }d(v,a)\leq 2\tau\}$.
Let $q$ denote the optimal center for $Q_i$, and let $w_i$ denote the average distance $\frac{1}{|Q_i|}\sum_{v\in Q_i}d(q,v)$.
Let $c:=\text{argmin}_{c'}\sum_{v\in V_c}d(c,v)$ where $V_c$ denotes the $C_{min}$ closest points to $c$ subject to $Q_i\subseteq V_c$,
and let $Q'=V_c\setminus Q_i$.
By a Markov bound, at least half of the points $q'\in Q_i$ have $d(q,q')\leq 2w_i$.
Note that the algorithm is simulating $10\log d$ uniformly random draws from $Q_i$ in step  5 
By a Chernoff bound, at least one sampled point $\hat q$ must satisfy $d(q,\hat q)\leq 2w_i$ with high probability.
Then, 
\begin{align*}
\text{cost}_{min}(\hat q)&\leq\sum_{v\in Q_i}d(\hat q,v)+\sum_{v\in Q'}d(\hat q,v)\\
&\leq |Q_i|d(\hat q,q)+\sum_{v\in Q_i}d(q,v)+|Q'|d(\hat q,c)+\sum_{v\in Q'}d(c,v)\\
&\leq 2|Q_i|w_i+|Q_i|w_i+21\epsilon(n-z)(\frac{w_{avg}}{20\epsilon})+\sum_{v\in Q'}d(c,v)\\
&\leq 3\sum_{v\in Q_i}d(q,v)+\sum_{v\in Q'}d(c,v)+\frac{21}{20}\cdot w_{avg}(n-z)\\
&\leq 3\cdot\text{cost}_{min}(c)+\frac{21}{20}\cdot w_{avg}(n-z)
\end{align*}

Therefore, for all but $x$ good components $G_i^A$, the cost computed by the coordinator
will be $\leq 3\left(3+\frac{1\alpha}{20}\right)\frac{1}{x}\mathcal{OPT}$,
and all but $x$ bad components will have cost 
$> 3\left(3+\frac{1\alpha}{20}\right)\frac{1}{x}\mathcal{OPT}$.

Therefore, one iteration of step 6 will set $\mathcal{C}$ equal to
$\{G_1^A,\dots,G_k^A\}$, the near-optimal clustering.
As in the previous theorem, the final cluster purifying step will reduce the error
of the clustering down to cost $(1+\alpha)\OPT$, which must be $\epsilon$-close
to $\OPT$ by definition of approximation stability.

Now we move to the case where $w_{avg}$ and $C_{min}$ are not known.
For $w_{avg}$, we can use the same technique as in the previous sections: run an approximation algorithm for $k$-median with $z$-outliers to
obtain a constant approximation to $w_{avg}$.
For example, recently it was shown how to achieve an $7.08$-approximation in polynomial time \cite{krishnaswamy2017constant}.
Then we have a guess $\hat w$ for $w_{avg}$ that is in $[w_{avg},7.08w_{avg}]$. The situation is much like the case where $w_{avg}$ is known,
but the constant in the minimum allowed optimal cluster size increases by a factor of 7. The algorithm proceeds the same was as before.


Finally, we show how to binary search for the correct value of $C_{min}$.
If we run Algorithm \ref{alg:distr-aso} for $\hat C\in [22\epsilon(n-z),C_{min}]$, the number of edges in $G'$ in step 4 
must be a superset of the edges when $\hat C=C_{min}$.
However, since each core $X_i$ has fewer than $22\epsilon(n-z)$ neighbors outside of $X_i$, each core is still in a separate component of $G'$.
For each such component, $\text{cost}_{min}(C_i')$ still has cost 
$\leq 3\left(3+\frac{1\alpha}{20}\right)\frac{1}{x}\mathcal{OPT}$,
therefore, the number of good components with low cost after
step 7 is $\geq k-x$. 
If we run Algorithm \ref{alg:distr-aso} for $\hat C\in [C_{min},n]$, similar to the proof of Theorem \ref{thm:aso}, the number of components with cost 
$\geq 3\left(3+\frac{1\alpha}{20}\right)\frac{1}{x}\mathcal{OPT}$ after step 7 is $\leq k+x$ because there is at most one outlier component.
Therefore, the size of $\mathcal{X}$ as a function of $\hat C$ 
is monotone, and so we can perform binary search to find a value $\hat C$ such that
step 6 returns the optimal clustering.
%

The algorithm communicates $\tilde O(sk+z)$ bits to approximate $w_{avg}$.
The total communication in the first step is $O\left(sk\log n\right)$, since there are at most $\min\left\{\frac{s}{\epsilon}, O(sk) \right\}$ sets
of size at least $\max\left\{\frac{\epsilon n}{s}, \Omega(\frac{n}{sk}) \right\}$.
The communication for each component in step \ref{step:dfor} is $s\log (n)$, and $d\leq\frac{1}{\epsilon}$ (since each component is size $>\epsilon n$).
So the total communication in step \ref{step:dfor} is $\frac{s}{\epsilon}\log (n)$.
The binary search wrapper to find $C_{min}$ adds a $\log n$ multiplicative factor to the total communication.
Therefore, the total communication is $\tilde O\left(sk+z\right)$.
This completes the proof.
\end{proof}

\section{Communication Complexity Lower Bounds} \label{sec:lowerbounds}
In this section, we show lower bounds for the communication complexity
of distributed clustering with and without outliers.
We prove $\Omega(sk+z)$ lower bounds for two types of clustering problems:
computing a clustering whose cost is at most a $c$-approximation to the optimal
(or even just to determine the cost up to a factor of $c$) for any $c\geq 1$,
and computing a clustering which is $\delta$-close to $\OPT$, for any $\delta<\frac{1}{4}$.
This shows prior work is tight \cite{guha2017distributed}.

Our lower bounds hold even when the data satisfies a very strong, general notion of
stability, i.e. $c$-separation, for all $c \geq 1$. Recall, by Lemma \ref{lem:reductions}, an instance that satisfies $(\alpha n)$-separation satisfies almost all other notions of stability including approximation stability and perturbation resilience.
Furthermore, our lower bounds for $\delta$-close clustering hold even under a weaker
version of clustering, which we call \emph{locally-consistent clustering}. 
In this problem, instead of assigning a globally consistent index $[1,\dots,k]$ for each point, 
each player only needs to assign indices to its points that is consistent in a local manner, e.g.,
the assignment of indices $[1,\dots,k]$ to clusters $\{C_1,\dots,C_k\}$ chosen by player 1 
might be a permutation of the assignment chosen by player 2.

We work in the multi-party message passing model, 
where there are $s$ players, $P_1, P_2, \ldots, P_s$, who receive inputs $X^1$, $X^2$, \ldots $X^s$ respectively. They have access to private randomness 
as well as a common publicly shared random string $R$, and the objective 
is to communicate with a central coordinator who computes a function 
$f: X^1 \times X^2 \ldots \times X^s \to \{0, 1\} $ on the joint 
inputs of the players. The communication has multiple rounds and each player 
is allowed to send messages to the coordinator. Note, we can simulate communication 
between the players by blowing up the rounds by a factor of $2$. 
Given $X^i$ as an input to player $i$, let $\Pi\left(X^1, X^2, \ldots X^s\right)$ be 
the random variable that denotes the transcript between the players and the 
referee when they execute a protocol $\Pi$. For $i \in [s]$, let $\Pi_i$ 
denote the messages sent by $P_i$ to the referee. 

A protocol $\Pi$ is called a $\delta$-error protocol for function $f$ if there 
exists a function $\Pi_{out}$ such that for every input 
$Pr\left[\Pi_{out}\left(\Pi(X^1, X^2, \ldots X^s)\right) = f(X^1, X^2, \ldots X^s)\right] \geq 1 - \delta$. 
The communication cost of a protocol, denoted by $|\Pi|$, is the maximum length of
$\Pi\left(X^1, X^2, \ldots, X^s\right)$ over all possible inputs and random coin 
flips of all the $s$ players and the referee. The randomized communication 
complexity of a function $f$, $R_{\delta}(f)$, is the communication cost 
of the best $\delta$-error protocol for computing $f$.

\begin{definition}(\emph{Multi-party set disjointness} (\textsf{DISJ}$_{s,\ell}$).)
Given $s$ players, denoted by $P_1$, $P_2$, \ldots $P_s$, each player 
receives as input a bit vector $X^j$ of length $\ell$. 
Let $X$ denote the a binary matrix such that each $X^j$ is a column of $X$. 
Let $X_i$ denote the $i$-{th} row of $X$ and $X^j[i]$ denote the 
$(i,j)$-th entry of $X$. Then, \textsf{DISJ}$_{s,\ell} = \bigvee_{i\in [\ell]} \bigwedge_{j\in[s]}X^j[i]$, i.e. \textsf{DISJ}$_{s,\ell} =0$ if 
at least one row of $X$ corresponds to the all ones vector and $1$ otherwise. 
\end{definition}

We note that set disjointness is a fundamental problem in 
communication complexity and we use the following lower bound 
for \textsf{DISJ}$_{s,\ell}$ in the message-passing model by \cite{braverman2013tight}:

\begin{theorem}(Communication complexity of \textsf{DISJ}$_{s,\ell}$.)
\label{thm:adisjointness}
For any $\delta > 0$, $s = \Omega(\log(n))$ and $\ell \geq 1$,
the randomized communication complexity of multi-party set disjointness, $R_{\delta}(\textsf{DISJ}_{s,\ell})$, is $\Omega(s\ell)$.
\end{theorem}

Intuitively, we show a lower bound of $\Omega(sk)$ via a  
reduction from multi-party set disjointness with $s$ players, where each player get a bit vector of length $\ell=(k-1)/2$.
We first consider the case where $z=0$ and create a clustering instance as follows : 
we define upfront $2\ell+2$ possible locations
for the points: $\{ p_1,\dots, p_{\ell},q_1,\dots,q_{\ell},p,q \}$.
Now for all $i=1$ to $\ell$, each player creates a point at location $p_i$ if their input contains element $i$, i.e. the $i$-th coordinate of their bit vector is $1$, otherwise
they create a point at location $q_i$. The coordinator creates points at locations $p_1,\dots, p_\ell$ and $p$ and $q$. Note, the coordinator does not create any point at locations $q_1, \ldots q_{\ell}$.

If the set disjointness is a no instance, then there will be some element $i$ shared by all players. Observe, the number of unique locations in this case are $2\ell + 1 = k$, since every player inserts a point at location $p_i$ and no point is inserted at location $q_i$.
Therefore, it is easy to see that the optimal solution has cost $0$ since we can assign each unique location to its own cluster.
If the set disjointness is a yes instance, then there will be $2\ell + 2 = k+1$ distinct locations in the clustering
instance, so the optimal solution must have non-zero cost.
It follows that we can solve the original set disjointness instance by using a $c_1$-approximate clustering algorithm. We note that
the input can be made arbitrarily stable in either yes or no instances, by setting the
distances between $p_1,\dots,p_\ell,q_1,\dots,q_\ell$ arbitrarily far away from each other. We show a similar reduction works when the input instance has outliers. 


\begin{theorem} \label{thm:lb-apx}
Given $c_1\geq 1$,  the communication complexity for computing a $c_1$-approximation for
$k$-median, $k$-means, or $k$-center clustering is $\Omega(sk)$, even when promised that the
instance satisfies $c_2$-separability for any $c_2 \geq 1$.
Further, for the case of clustering with $z$ outliers, computing a $c_1$-approximation to $k$-median, $k$-means, or $k$-center cost, under the same promise requires $\Omega(sk+z)$ bits of communication.
\end{theorem}


\begin{proof}
The proof strategy we follow is to show that any  distributed clustering algorithm, $\mathcal{A}$, that achieves a $c_1$-approximation to $k$-median, $k$-means or $k$-center, given that the input satisfies $c_2$-separability, can be used to construct a distributed protocol, $\Pi$, that solves \textsf{DISJ}$_{s,\ell}$. Since the communication complexity of \textsf{DISJ}$_{s,\ell}$ is lower bounded by $\Omega(s\ell)$, this implies a lower bound on the communication cost of the distributed algorithm. 

First we consider the case when $z=0$. 
W.l.o.g. assume $k$ is odd, and set the length of each bit vector to be $\ell=(k-1)/2$.
We create a clustering instance as follows.
We define upfront the following set of $k+1$ locations on a graph. 
There is a clique of $2\ell$ locations such that all pairs of locations are distance $2\max(c_1,c_2, 1+\alpha)\textrm{poly}(n)$ apart.
Label these locations $p_1,\dots, p_{\ell},q_1,\dots,q_{\ell}$.
There are two additional locations, $p$ and $q$, such that the distance from $p$ to $q$ is 1, and the distance
from $p$ and $q$ to any other point is $2\max(c_1,c_2, 1+\alpha)\textrm{poly}(n)$.
Now for all $i=1$ to $\ell$, each player creates a point at location $p_i$ if it contains element $i$ (i.e., the $i$-th index of the bit vector is non-zero), otherwise
it creates a point at location $q_i$. The coordinator creates points at locations $p_1,\dots,p_\ell$ and $p$ and $q$.

If the set disjointness is a no instance,  i.e., \textsf{DISJ}$_{s,\ell} =0$, then there will be some element $i$ shared by all players.
Therefore, no point at location $q_i$ is ever created, there are $\leq k$ distinct locations in the clustering instance.
The optimal solution has cost $0$ by assigning each location to be its own cluster. We note 
that this clustering instance is $c_2$-separable, for any $c\geq 1$. 
To see this, note the maximum distance between two points in the same cluster is $0$ and the minimum distance across clusters is non-zero, therefore separability is satisfied for any $c_2 \geq 1$. 
Note, a 
similar argument for hard instances satisfying beyond-worse case assumptions carries through 
in subsequent reductions. 

If the set disjointness is a yes instance, i.e. \textsf{DISJ}$_{s,\ell} =1$ , then there will be $k+1$ distinct points in the clustering
instance. The optimal cost is 1 by putting $p$ and $q$ in the same cluster, and assigning all other points
to their own cluster. 
In this case and the previous case, the maximum distance between points in the same cluster (recall this distance is $1$),
is smaller than the minimum distance between points in different clusters (recall this is at least $2\max(c_1,c_2)\textrm{poly}(n)$) by a factor of at least $c_2\textrm{poly}(n)$ ,therefore the instance is $c_2$-separated for any $c_2 \geq 1$.  
It follows that we can solve the \textsf{DISJ}$_{s,\ell}$ instance by using a $c_1$-approximate clustering algorithm.
If the coordinator sees all of its points are given distinct labels, then it returns no. Otherwise, the coordinator returns yes. 
Recalling \textsf{DISJ}$_{s,\ell}$ has communication complexity $\Omega(sk)$ completes the lower bound.

Now we consider the case when $z>0$.
Note that we can assume $z\in \omega(sk)$ based on the previous paragraphs. Given $z\in \omega(sk)$
our goal is now to find an $\Omega(z)$ lower bound.
We construct a new clustering instance similar to the previous construction, but $s=2$ and $k=3$
(if $s>2$ or $k>3$, then only give nonempty input to the first two machines, or add $k-3$ points arbitrarily far away).
Then we give a reduction from 2-player set disjointness, \textsf{DISJ}$_{2,\ell}$, where Players 1 and 2 are given bit vectors $X^1,X^2\in\{0,1\}^\ell$,  as the input and
and the number of nonzero elements in each of $X^1$ and $X^2$ is $\ell/4$.
This version of set disjointness has communication complexity $\Omega(\ell)$ \cite{razborov1992distributional}.
We set $\ell=2z+4$, and we create a clustering instance as follows.
There is a clique of $\ell$ locations $p_1,\dots, p_{\ell}$ such that all pairs of locations are distance $2\max(c_1,c_2)\textrm{poly}(n)$ apart.
We also add locations $p$ and $q$ such that $d(p,q)=1$, and $p$ and $q$ are $2\max(c_1,c_2)\textrm{poly}(n)$ from the other points.
Each player $j\in \{1,2\}$, adds a point at location $p_j$ if $X^j[i]=1$, otherwise do not add a point. The coordinator creates points at locations $p$ and $q$.

Note the number of points created is $z+4$.
Similar to the previous paragraph, if set disjointness is a no instance, i.e. 
\textsf{DISJ}$_{2,\ell} =0$, then there is some index $i$ such that two points are at location
$p_i$. This implies that the number of unique locations is $z+3$. The optimal $k$-median with 
$z$ outliers solution is to make $p_i$ a cluster center, make $p$ and $q$ to be independent 
clusters, and make the $z$ remaining points to be outliers, so the total cost is zero.
If the set disjointness is a yes instance, i.e. \textsf{DISJ}$_{2,\ell} =1$, then each player 
inserts $\ell$ points in unique locations, and the coordinator $p$ and $q$. Therefore, there 
are $z+4$ points in different locations, 
so the optimal clustering is to put $p$ and $q$ into the same cluster, pick arbitrary $p_{i}, 
p_{i'}$ to be independent clusters and label the rest of the $p_{j}$'s as outliers.
Both yes and no instances also are $c_2$-separable, for any $c_2 \geq 1$. Further, the yes case has $0$ cost and the no 
case cost $1$, and thus any $c_1$-approximation clustering algorithm can distinguish between 
the two cases.
This completes the proof.

\end{proof}

By Lemma \ref{lem:reductions}, $\Omega(sk + z)$ is also a lower bound for instances that are $(1+\alpha, \epsilon)$-approximation stabile or $(1+\alpha)$-perturbation resilient for any $\alpha, \epsilon > 0$. 
We note that thus far we have ruled out a distributed clustering algorithm that has
communication complexity less than $\Omega(sk+z)$ to output the exact clustering under strong stability assumptions.  
Next, we prove the same communication lower bound holds when the goal is to return a 
clustering that is $\frac{1}{4}$-close to optimal in hamming distance. Note, this holds even
when the algorithm outputs a $c$-approximate solution to the clustering cost. 
Intuitively, the proof is again a reduction from \textsf{DISJ}$_{s,\ell}$, similar to the
proof of Theorem \ref{thm:lb-apx}.
The main difference is that we add roughly $\frac{n}{2}$ copies each of points $p$ and $q$.
If set disjointness is a no instance, $p$ and $q$ will each be in their own cluster,
but if it is a yes instance, then $p$ and $q$ must be combined into one cluster.
These two clusterings are $\frac{1}{2}$-far from each other, so returning a $\frac{1}{4}$-close solution requires solving set disjointness.

\begin{theorem} \label{thm:hamming}
Given $0<\delta<\frac{1}{4.01}$, the communication complexity for computing a clustering that is $\delta$-close to the optimal is $\Omega(sk + z)$, even when promised that the
instance satisfies $c$-separation, for any $c \geq 1$.
\end{theorem}

\begin{proof}
Again, the proof strategy we follow is to show that any  distributed clustering algorithm,
$\mathcal{A}$, that gets $\delta$-close to the optimal clustering,
given that the input satisfies $c$-separability, can be used to construct a distributed 
protocol, $\Pi$, that solves \textsf{DISJ}$_{s,\ell}$. Since the communication complexity of
\textsf{DISJ}$_{s,\ell}$ is lower bounded by $\Omega(s\ell)$, this implies a lower bound on
the communication cost of the distributed algorithm. 

Now we assume $k$ is even, and set the lengths of the input bit vectors to be $\ell=(k-2)/2$.
We create a clustering instance as follows:
we define locations $p_1,\dots,p_{\ell},q_1,\dots,q_{\ell}$ each distance 
$\max(c,1+ \alpha)\textrm{poly}(n)$ from each other.
There are two additional locations $p$ and $q$ distance $\max(c, 1+\alpha)\textrm{poly}(n)$ from the 
previous points, but $d(p,q)=1$.
Now for all $i=1$ to $\ell$, each player inserts a point at location $p_i$ if their input 
contains element $i$, i.e. if the $j$-th player has $X^j[i] =1$ they insert a point at $p_i$, else they insert a point at location $q_i$. The coordinator creates points at locations $p_1, p_2 \ldots p_{\ell}$.
Additionally, we also make $\frac{n}{2}-\ell\cdot(s+1)$ copies of both $p$ and $q$ and 
assign them to the coordinator. Next, we note that we set $n > \frac{4\ell(s+1)}{1 - 4\delta}$,
and rearranging the terms implies $\delta < \frac{1}{4} - \frac{(s+1)\ell}{2n}$. 

If the set disjointness is a no instance, i.e. \textsf{DISJ}$_{s,\ell} =0$, then there 
will be some element $i$ shared by all players. This implies all the players create points at location $p_i$. 
Therefore, no point at location $q_i$ is ever created, and there are $k$ distinct locations in the clustering instance.
The optimal solution has cost 0 by assigning each point to its own cluster. As seen before, this clustering instance is 
$c$-separable. 
Any $\delta$-close clustering must have all clusters of size at most $\frac{3n}{4}-\frac{3\ell\cdot (s+1)}{2}$, 
since $p$ and $q$'s clusters are both size $\frac{n}{2}-\ell\cdot (s+1)$ and $\delta n \leq \frac{n}{4} - \frac{(s+1)\cdot \ell}{2}$.

If the set disjointness is a yes instance, i.e. \textsf{DISJ}$_{s,\ell} =1$, 
then there will be $k+1$ distinct points in the clustering instance.
The optimal clustering is to put $p$ and $q$ in the same cluster, since all 
other points are arbitrarily far away. Then, the cluster that contains all 
copies of $p$ and $q$ together is of size $n - 2(s+1)\cdot\ell$. Observe, the maximum distance between 
points in every cluster is atleast $c$-factor smaller than the minimum distance between two points in different clusters. 
Therefore, this instance is $c$-separable, for all $c \geq 1$. Further, the optimal solution has cost $\Omega(n - 2(s+1)\cdot\ell)$. 
Recall, $\delta n < \frac{n}{4} - \frac{(s+1)\cdot\ell}{2}$.
Any $\delta$-close clustering must have one cluster of size 
$n - 2(s+1)\cdot\ell - \left(\frac{n}{4} - \frac{(s+1)\cdot\ell}{2}\right) > 
\frac{3n}{4} - \frac{3\ell\cdot (s+1)}{2}$, since the cluster
with $p$ and $q$ is size $n- 2\ell\cdot (s+1)$.


Observe, we have now reduced the problem to computing the cardinality of the largest cluster. 
The coordinator can determine the size of the largest cluster since he has access to all the 
copies of $p$ and $q$. Note, all other clusters are of size $1$. 
It follows that we can solve the original set disjointness instance by using a $\delta$-
Hamming distance clustering algorithm.
If the coordinator sees the largest cluster is size  greater than $ \frac{3n}{4}-\ell\cdot 
s$, then it returns yes. Otherwise, it returns no.
Since \textsf{DISJ}$_{s,\ell}$ has communication complexity $\Omega(s\ell)$, this implies an 
$\Omega(sk)$ lower bound for the case where $z=0$. 

Now we consider the case when $z>0$.
Note that we can assume $z\in \omega(sk)$ since we already know a $\Omega(sk)$ lower bound 
based on the previous paragraphs. 
Given $z\in \omega(sk)$
our goal is now to find an $\Omega(z)$ lower bound. We now only consider $2$-player 
disjointness, i.e. \textsf{DISJ}$_{2,\ell}$ and set $k =3$. Further, we are guaranteed that 
the inputs to the two players $X^1$ and $X^2$ have at most $\ell/4$ non-zero entries. Recall, 
the communication complexity of \textsf{DISJ}$_{2 ,\ell} = \Omega(\ell)$ 
\cite{razborov1992distributional}, therefore we set $\ell = 2z + 4$. Note, the two players 
receive as input length $\ell$ bit vectors $X^1$ and $X^2$, and construct a clustering 
instance as follows: first we define upfront the following set of $\ell+2$ locations on a 
graph. 
There is a clique of $\ell$ locations such that all pairs of locations are distance 
$\max(c_1,c_2)\textrm{poly}(n)$ apart.
Label these locations $p_1,\dots, p_{\ell}$.
There are two additional locations, $p$ and $q$, such that the distance from $p$ to $q$ is 
$1$, and the distance
from $p$ and $q$ to any other location is $\max(c_1,c_2)\textrm{poly}(n)$. 
For $i \in \{1,2\}$, for $j \in [\ell]$ if $X^j[i] =1$, player $j$ inserts a point at location $p_i$, 
else player $j$ does nothing. Additionally, the coordinator creates $\frac{n-z-1}{2}$ points 
at locations $p$ and $q$. Note, the total number of points created are $n$.  

If the set disjointness is a no instance, i.e. \textsf{DISJ}$_{2,\ell} =0$, then there will be some element
 $i$ shared by both the players. 
Therefore, two points lie at location $p_i$, and there are $\leq z + 3$ distinct locations in the set 
$\{ p_1,\dots, p_{\ell}, p, q\}$. Additionally, there are $\frac{n-z-2}{2}$ copies of points at $p$ and $q$.
The optimal solution assigns points at $p_i$, $p$ and $q$ to be their own clusters, and set the remaining 
$z$ distinct points to be outliers. Observe, the clustering cost of this solution is $0$.
Further, any $\delta$-close clustering must have all clusters of size smaller than 
$\frac{n-z-2}{2} + \delta(n-z) < \frac{n-z-2}{2} + \frac{n-z}{4} = \frac{3(n-z)}{4}-1$, 
since clusters at locations $p$ and $q$ are both size $\frac{n-z-2}{2}$ and $\delta (n-z) < \frac{n-z}{4}$.

If the set disjointness is a yes instance, i.e. \textsf{DISJ}$_{2,\ell} =1$, then there will 
be $z+4$ distinct points in the clustering instance. The optimal solution sets $z$ arbitrary 
points in the set $\{ p_1,\dots, p_{\ell}\}$ to be outliers. Two points in 
this set remain and they have to be assigned as their own clusters. W.l.o.g., let these 
points be $p_i$ and $q_j$. This forces $p$ and $q$ to be the same cluster. This clustering 
incurs cost $\frac{n -z -2}{2}$ and has cardinality $n-z-2$.  
Recall, $\delta n < \frac{n-z}{4}$. Therefore, any $\delta$-close clustering must have one 
cluster of size $n - z - 2 -  \delta(n-z) > n - z - 2 - \left(\frac{n-z}{4}\right) = 
\frac{3(n - z)}{4} - 2$, since the cluster
that contains all points at $p$ and $q$ is of size $n- z-2$. Since the clusters have integer cardinalities, having 
cluster size strictly greater than $\frac{3(n - z)}{4} - 2$ is equivalent to a cluster size 
of $\geq \frac{3(n - z)}{4} - 1$. Therefore, the largest cluster size in the two cases are 
disjoint. Since the coordinator has all the copies of points at $p$ and $q$, he can determine which 
case we are in and in turn solve \textsf{DISJ}$_{2,\ell}$, which completes the proof. 
\end{proof}

Though the above lower bounds are quite general, it is possible that the hard instances may 
have the optimal clusters to be very different in cardinality if $sk$ is large.
The smallest cluster may be size $O\left(\frac{n}{sk}\right)$, while the largest cluster may 
be size $\Omega(n)$. 
Often, real-world instances may have balanced clusters. 
Therefore, we extend our previous lower bounds to the setting where we are promised 
that the input clusters are well balanced, i.e. have roughly the same cardinality. We also consider algorithms that only get $\delta$-close to the optimal clustering. We are further promised that the input instance 
satisfies $(1+\alpha, \epsilon)$-approximation stability and show lower bounds in this setting. We note that the combination of 
these assumptions is really strong yet we can show non-trivial lower bounds in this setting, indicating that $\Omega(sk + z)$ communication is fundamental barrier in distributed clustering.
We begin by defining the following basic notions from information theory: 
\begin{definition}(Entropy and conditional entropy.)
The entropy of a random variable $X$ drawn from distribution $\mu$, denoted as $X \sim \mu$, 
with support $\chi$, is given by 
$$
H(X) = \sum_{x \in \chi} \Pr_{\mu}[X = x]\log\frac{1}{\Pr_{\mu}[X = x]}
$$
Given two random variable $X$ and $Y$ with joint distribution $\mu$, the entropy of $X$ conditioned on $Y$ is given by
$$
H(X\mid Y) =  \expecf{y\sim \mu(Y)}{\sum_{x \in \chi} \Pr_{\mu(X\mid Y =y)}[X = x]\log \frac{1}{\Pr_{\mu(X\mid Y =y)}[X = x]} }
$$
\end{definition}

Note, the binary entropy function $H_2(X)$ is the entropy 
function for the distribution $\mu(X)$ supported on 
$\{0,1\}$ such that $\mu(X)=1$ with probability $p$ and $\mu(X) =0$ otherwise. 

\begin{definition}(Mutual information and conditional mutual information.)
Given two random variables $X$ and $Y$, the mutual information between $X$ and
$Y$ is given by
$$
I(X;Y) = H(X) - H(X\mid Y) = H(Y) - H(Y\mid X)
$$
The conditional mutual information between $X$ and $Y$, 
conditioned on a random variable $Z$ is given by 
$$
I(X;Y\mid Z) = H(X\mid Z) - H(X\mid Y,Z) = H(Y\mid Z) - H(Y\mid X,Z)
$$
\end{definition}

\begin{definition}(Chain rule for mutual information.)
\label{eqn:chain_rule}
Given random variables $X_1, X_2, \ldots X_n$, $Y$ and $Z$, the chain rule for mutual information is defined as
$$
I(X_1, X_2, \ldots X_n; Y\mid Z) = \sum^{n}_{i=1} I(X_i; Y\mid X_1, X_2, \ldots X_{i-1}, Z )
$$
\end{definition}

Recall, the $\delta$-error randomized communication complexity of $\mathcal{A}$, $R_{\delta}
(\mathcal{A})$, in the message passing model is communication complexity of any randomized 
protocol $\Pi$ that solves $\mathcal{A}$ with error at most $\delta$. 
Let $X^1, X^2, \ldots X^s$ be the inputs for players $P_1, P_2, \ldots P_s$. 
Let $\mu$ be a distribution over $X^1, X^2, \ldots X^s$. 
We call a deterministic protocol $(\delta,\mu)$-error if it gives the correct answer for 
$\mathcal{A}$ on at least a $1 - \delta$ fraction of the input, 
weight by the distribution $\mu$. Let $D_{\mu, \delta}(\mathcal{A})$ denote the cost of the
minimum communication $(\delta, \mu)$-error protocol. By Yao's minimax lemma, we know that
$R_{\delta}(\mathcal{A}) \geq \textrm{max}_{\mu} D_{\mu, \delta}(\mathcal{A})$.
Therefore, in order to lower bound the randomized communication complexity of $\mathcal{A}$,
it suffices to construct a distribution $\mu$ over the input such that any deterministic 
protocol that is correct on $1-\delta$ fraction of any input can be analyzed easily.
We note that the communication complexity of a protocol $\Pi$ is further lower bounded by
it's information complexity. 

\begin{definition}(Information complexity of $\mathcal{A}$.)
\label{def:ainformation_complexity}
For $i \in [s]$, let $\Pi_i$ be a random variable that denotes the transcript of 
the messages sent by player $P_i$ to the coordinator. We overload notation by letting $\Pi$ 
denote the
concatenation of $\Pi_1$ to $\Pi_s$. Then, the information complexity of $\mathcal{A}$ is 
given by 
\[
\textsf{IC}_{\mu, \delta}(\mathcal{A}) = \min_{(\delta, \mu)-\textrm{error } \Pi} I (X_1, X_2, \ldots X_s ; \Pi)
\]
\end{definition}

By a theorem of \cite{huang2015communication}, we know that $R_{\delta}(\mathcal{A}) \geq 
\textsf{IC}_{\mu, \delta}(\mathcal{A})$.
Therefore, our proof strategy is to design a distribution $\mu$ over the input and lower 
bound the information complexity of the resulting problem. Critically, this relies on lower 
bounding the mutual information between the inputs for each player and the resulting protocol 
$\Pi$.

\begin{theorem} \label{thm:alice}
Given $\delta<\frac{1}{4}$ and the promise that the optimal clusters are balanced, i.e., the 
cardinality of each cluster is $\frac{n}{k}$, the communication complexity for computing a 
clustering that is $\delta$-close to the optimal $k$-means or $k$-median clustering
is $\Omega(sk)$. 
\end{theorem}

\begin{proof}
We begin with an $\Omega(k)$ lower bound for 2 players, and 
subsequently we will extend it to $s$ players. Denote player 1 by Alice, 
and player 2 by Bob. Let $X^1$ and $X^2$ denote the length $\ell$ 
bit vectors given as input to Alice and Bob respectively. 
We first describe the clustering instance that is created by 
Alice and Bob based on their input. Let $\ell = \frac{k}{2}$ 
and let $X^{j}[i]$ denote the $i$-th entry of the $j$-th bit vector. 
Consider 2-dimensional Euclidean space, $\mathbb{R}^2$. If $X^1[1] = 0$,
Alice constructs the points $\{(-3,1),(-3,-1)\}$ else she constructs 
the points $\{(0,1),(0,-1)\}$. If $X^2[1] = 0$, Bob constructs the 
points $\{(3,1),(3,-1)\}$, else he constructs the points $\{(0,1),(0,-1)\}$. 
Alice and Bob then repeat the above construction $k/2$ times, moving the 
gadgets arbitrarily far away from each other to ensure that no two
points from different gadgets get put into the same cluster.  

Focusing on the first gadget, we observe that if Alice and Bob both have $X^{1}[1]=X^{2}
[1]=1$, the point set $\{(0,1),(0,-1)\}$, the optimal $2$-clustering cost is $0$. In any
other case, the optimal clustering is for Alice's two input points to be a single cluster 
and Bob's two input points to be a single cluster. The same is true for Bob. Both Alice 
and Bob are aware of this setup, so the only unknown for Alice is a single bit representing
which of the two input pairs Bob received, i.e. $X^{2}[1]$. Similarly, the only unknown for
Bob is a single bit, $X^{1}[1]$.

In total, there are $2k$ input points, and $\mathcal{OPT}$ is composed of a union of the
$k/2$ optimal 2-clusterings, one from each gadget. Recall, 
$R_{\delta}(\mathcal{A}) \geq  \textsf{IC}_{\mu, \delta}(\mathcal{A})$, 
therefore we define a distribution $\mu$ over the input as follows: 
Each entry of $X^1$ and $X^2$ is $1$ with probability $1/2$ and $0$ otherwise. 
Recall, a $(\delta, \mu)$-error protocol $\Pi$ achieves the correct answer on at least 
a $1-\delta$ fraction of the input, i.e. it gets at least $1-\delta$ gadgets right. 
Further, we observe that if a clustering $\mathcal{C}$ is $\delta$-close to $\mathcal{OPT}$,
then it solves a $1-2\delta$ fraction of the $2$-clustering gadgets. Therefore, 
a distributed clustering algorithm that gets $\delta$-close to $\mathcal{OPT}$ 
achieves a $(2\delta, \mu)$-protocol. It remains to show that can lower bound
$\textsf{IC}_{\mu, 2\delta}$ for such a $\mu$. 
From definition \ref{def:ainformation_complexity}, it follows that
\begin{equation*}
\begin{split}
\textsf{IC}_{\mu, 2\delta}(\mathcal{A})  = I(X_1, X_2; \Pi) 
& = I(X_1;\Pi \mid X_2) + I(X_2 ;\Pi\mid X_1) \\
& \geq I(X_1;\Pi \mid X_2 ) \\
& \geq \Omega(\ell) = \Omega(k) 
\end{split}
\end{equation*}
where the first equality follows from the definition of information complexity, 
the second follows from the chain rule of mutual information (definition \ref{eqn:chain_rule}), 
the third follows from mutual information being non-negative and the last follows from Alice learning at least a $1-\delta$ fraction of Bob's input for which $X^1 = X^2 = 1$. 
Therefore, $R_{\delta}(\mathcal{A}) = \Omega(k)$, which completes the proof for $2$ players.

 Now we extend the construction to $s$ players to achieve an $\Omega(sk)$ bound.
WLOG, assume that $s$ is even. Create inputs for $s/2$ players equal to the inputs Alice, and set the inputs for the remaining $s/2$ players equal to the input for Bob.
Specifically, the $s/2$ players that mimic Alice all receive the same input $X$, and the $s/2$ players that mimic Bob receive the same input $Y$.
Then $\mathcal{OPT}$ is the same as in the two-player case, but with each point copied $s/2$ times. Observe, if a clustering $\mathcal{C}$
is $\delta$-close to $\mathcal{OPT}$, for $\delta < 1/4$, then at least half of the players mimicking ``Alice'' learn the solution to at least a
$1-\Theta(\delta)$ fraction of the gadgets. Recall, from the previous paragraph, Alice requires $\Omega(k)$ bits to learn a $1-\delta$ fraction of the clustering. In order to communicate this to $\Omega(s)$ other places, the total communication is $\Omega(sk)$, which implies the overall $\Omega(sk)$ lower bound.
Note there are only $\Theta(k)$ bits needed to specify the input for every player, since there are only two distinct inputs
each given to half the players. However, we are still able to obtain the $\Omega(sk)$ lower 
bound since this information needs to
travel to $\Omega(s)$ different players so that all players can output a correct clustering. 

\end{proof}

Next, we extend the above lower bound to clustering instances that are balanced and also satisfy 
$(1+\alpha, \epsilon)$-approximation stability. Perhaps surprisingly, we show that there is 
no trade-off between the stability parameters and the communication lower bound even if the 
clusters are balanced and the algorithm outputs a clustering that is $\delta < \epsilon/4$ 
close to the optimal clustering. In contrast, our previous result can handle all $\delta < 
1/4$. We begin by introducing a promise version of the multi-party set disjointness problem, 
where the promise states if the sets intersect, the intersect on exactly one element. 
Formally,

\begin{definition}(Promise multi-party set disjointness (\textsf{PDISJ}$_{s,\ell}$).)
Given $s$ players, denoted by $P_1$, $P_2$, \ldots $P_s$, each player 
receives as input a bit vector $X^j$ of length $\ell$. 
Let $X$ denote the a binary matrix such that each $X^j$ is a column of $X$. 
Let $X_i$ denote the $i$-{th} row of $X$ and $X^j_{i}$ denote the 
$(i,j)$-th entry of $X$. We are promised that at most one row of $X$ has all ones. Then, 
\textsf{PDISJ}$_{s,\ell} = \bigvee_{i\in [\ell]} \bigwedge_{j\in[s]}X^j_{i}$, i.e. 
\textsf{PDISJ}$_{s,\ell} =0$ if any row of $X$ corresponds to the all ones vector and $1$ 
otherwise. 
\end{definition}

We use a result of \cite{bar2004information} to lower bound the communication complexity of 
set-disjointness in the multi-party communication model. 

\begin{theorem}(Communication complexity of \textsf{PDISJ}$_{s,\ell}$ 
\cite{bar2004information}.)
\label{thm:apromise_disjointness}
For any $\delta > 0$, $s, \ell \in \N$,
the randomized communication complexity of promise multi-party set disjointness, $R_{\delta}
(\textsf{PDISJ}_{s,\ell})$, is $\Omega(\ell/s^2)$.
\end{theorem}

We show that an algorithm obtaining a $\delta$-close clustering, given the clusters are 
balanced and the clustering instance is $(1+\alpha, \epsilon)$-stable can be converted into a 
randomized communication protocol that solves \textsf{PDISJ}$_{s,\ell}$. 

\begin{theorem}
\label{thm:final_lowerbound}
Given a $(1+\alpha,\epsilon)$-approximation stable instance with $z$ outliers
such that $\epsilon = o(1)$ and $\delta< \frac{\epsilon}{4}$, and the promise that the 
optimal clusters are balanced, i.e.,\ the 
cardinality of each cluster is $\frac{n-z}{k}$, the communication complexity for computing a 
clustering that is $\delta$-close to the optimal $k$-means or $k$-median clustering
is $\Omega(sk+z)$.
\end{theorem}


\begin{proof}
We extend the previous proof to show the lower bound still holds if the input clustering 
instance satisfies 
approximation stability. 
Given $\delta<\frac{\epsilon}{4}<\frac{1}{4}$, first we show
that to achieve any $(1+\alpha)$-approximation to the optimal cost,
we cannot output a cluster containing points from different gadgets. Then, we introduce a 
communication problem that is a variant of set-disjointness and show that any clustering 
algorithm that gets $\delta$-close to an optimal clustering must indeed solve set-
disjointness with good probability.  We then invoke the set disjointness lower-bound from 
Theorem \ref{thm:apromise_disjointness}. 

Recall, Alice and Bob receive length $\ell$ bit vectors $X^1$ and $X^2$ as input. Let $\ell = 
\frac{k}{2}$ 
and let $X^{j}[i]$ denote the $i$-th entry of the $j$-th bit vector. 
Instead of constructing points $\{(-3,1),(-3,-1)\}$ or 
$\{(0,1),(0,-1)\}$, Alice now constructs two points at $(-L,0)$ if $X^{1}[0] = 0$ or 
constructs $\{(0,1),(0,-1)\}$ if $X^{1}[0] = 1$.
Similarly, Bob now constructs two points at $(L,0)$ if $X^{2}[0] = 0$ or $\{(0,1),(0,-1)\}$ 
otherwise. 

In total, there are $2k$ input points, and $\mathcal{OPT}$ is composed of a union of the
$k/2$ optimal 2-clusterings, one from each gadget. 
By setting $L > 10 (1+\alpha) \mathcal{OPT}$, it is easy to see that clusters within the same 
gadget that have unique $x$-coordinates cannot swap points and still obtain a $(1+\alpha)$-
approximation to the optimal cost. Therefore, 
the only possible clusters in a 
$(1+\alpha)$-approximate clustering that swap points must share their $x$-coordinate. 
Alice and Bob then repeat the above construction $k/2$ times, moving the 
gadgets arbitrarily far away from each other to ensure that no two
points from different gadgets get put into the same cluster while maintaining a $(1+\alpha)$-
approximation to the clustering cost. We fist show a sufficient condition under which the 
above construction is $(1+\alpha, \epsilon)$-stable clustering instance. Then, we show that 
any algorithm that gets $\delta$-close to the optimal clustering must communicate 
$\Omega(sk)$ bits.

Focusing on the first gadget, we observe that if Alice and Bob both have $X^{1}[1]=X^{2}
[1]=1$, the point set is $\{(0,1),(0,-1)\}$, and the optimal $2$-clustering cost is $0$. Alice's two points lie in different clusters and Bob is symmetric. In any
other case, the optimal clustering is for Alice's two input points to be a single cluster 
and Bob's two input points to be a single cluster. In the case where the input for Alice is $0$, the clustering is determined and the cost is $0$. The same holds for Bob. Therefore, the only case in which the clustering instance has non-zero cost is when the input on the first index is $(0,1)$ or $(1,0)$. In such as case, the clustering cost is $4$. 
Both Alice 
and Bob are aware of this setup, so the only unknown for Alice is a single bit representing
which of the two input pairs Bob received, i.e. $X^{2}[1]$. Similarly, the only unknown for
Bob is a single bit, $X^{1}[1]$. In every case, each cluster has cardinality $2$, and therefore the instance is balanced. 

Next, if the number of coordinates $i$ such that $X^{1}[i]= X^{2}[i]= 1$ is at most $\epsilon 
k$, we observe that the instance is $(1+\alpha, \epsilon)$-stable. To see this, observe that 
any $(1+\alpha)$-approximation to the cost can change only swap points when the two optimal 
clusters for a given gadget share the same $x$-coordinate. Note, in all other cases, the 
clusters are at least $L$ apart, and the cost cannot be a $(1+\alpha)$-approximation. The 
optimal clusters share the same $x$-coordinate only when $X^{1}[i]= X^{2}[i]= 1$ and if the 
points switch from their optimal cluster, the cost increases by $2$ units. However, since 
there are at most $\epsilon k$ such gadgets overall, at most $8\epsilon k = 4\epsilon n$ 
points can switch from their optimal clusters without blowing the cost more than a 
$(1+\alpha)$-factor. Therefore, rescaling $\epsilon$ by $4$, the instance is $(1+\alpha, 
\epsilon)$-stable. 

Finally, we describe how a clustering algorithm that obtains a $\delta$-close approximation 
to the optimal clustering in the aforementioned instance is a valid protocol for solving 
\textsf{PDISJ}$_{s,\ell}$. Given an instance of \textsf{PDISJ}$_{2,k/2}$, Alice and Bob 
create $\epsilon n - 1= 2\epsilon k - 1$ dummy indices that are set to $1$ for both Alice and 
Bob. Note, given the promise, this Alice and Bob have at most $2\epsilon k$ coordinates that 
are $(1,1)$ and as discussed previously, the resulting clustering instance is $(1+\alpha, 
\epsilon)$-stable. Let the $\ell' = k/2 + 2\epsilon k - 1$. Using public randomness, Alice 
and Bob agree on a uniformly random permutation $\pi: [\ell'] \to [\ell']$. They now randomly 
permute their input along with the dummy coordinates and run the clustering protocol. 

Observe, since the clustering protocol outputs a $\delta$-close solution for $\delta < 
\frac{\epsilon}{4}$ at least $1 - 2\delta \geq 1-\epsilon/2$ fraction of the points get 
classified correctly. Further, each cluster has cardinality $2$, therefore at least $(1 - 
\epsilon)$-fraction of the clusters would be the optimal clusters. Since we uniformly permute 
the indices of the input before running the protocol, for any given index, the corresponding 
cluster has hamming distance $0$ from the optimal clustering with probability at least $1 - 
\epsilon$. In other words at most $\epsilon$-fraction of the clusters are incorrect. The 
protocol outputs a clustering that is known to both Alice and Bob. For each index of their 
input, they know whether their pair of points lie in the same cluster of different clusters. 
Let $\mathcal{I}$ be the set of indices for which Alice and Bob's points lie in different 
clusters. If  $\mathcal{I} > 4 \epsilon k$, protocol outputs fail. Else, Alice communicates 
her input on the set $\mathcal{I}$ to Bob. Bob applies $\pi^{-1}$ to $\mathcal{I}$, and 
verifies if the indices correspond to the dummy indices that were added or indeed the sets 
are not disjoint. Note the verification step requires additional communication. Since 
$\mathcal{I} \leq 4 \epsilon k$, and $\epsilon = o(1)$, the total additional communication is 
$o(k)$.    

Consider the case where the sets are not disjoint. Then, there is an index $i^*$ such that 
the input $X^1[i^*]=X^2[i^*]=1$ and  
with probability at least $1 -\epsilon$, the clustering algorithm (protocol) correctly 
clusters the corresponding $2$-means gadget. This implies that Alice and Bob know that their 
pair of points lie in different clusters, thus $i^* $ is in the set $\mathcal{I}$ and Alice 
communicates $X^1[i^*]$ to Bob. Bob can then verify that $\pi^{-1}(i^*)$ is not a dummy index 
and that $X^1[i^*] = X^2[i^*] = 1$. 
The case where the sets are disjoint is more subtle. In this case, the clustering algorithm 
may return $4\epsilon k$ indices such that Alice's points belong to separate clusters, i.e. 
they correspond to a $(1,1)$ input, therefore leading to false positives. However, we observe 
that we can verify if the sets are disjoint by Alice sending over her input bits on the set 
$\mathcal{I}$ to Bob. Bob can verify if they correspond to the dummy indices and the sets are 
indeed disjoint. Note, this increases the over all communication by $o(k)$. We note that by 
Theorem \ref{thm:apromise_disjointness}, the communication of the protocol is $\Omega(k - 
\epsilon k) = \Omega(k)$. We then use the previous technique of cloning the Alice and Bob 
players $s/2$ times each, therefore, communicating the solution to each player requires 
$\Omega(sk)$ bits of communication. 

Next, we extend out lower bound to the case where the input has $z$ outliers, the clusters 
are balanced and the instance satisfies $(1+\alpha, \epsilon)$-stability with outliers. In 
order to show a communication lower bound we consider the \textsf{PDISJ}$_{2,\ell}$ problem 
and show that a protocol that solves an instance that satisfies the above assumptions in turn 
solves \textsf{PDISJ}$_{2,\ell}$ with good probability. 
Alice and Bob receive length $\ell$ bit vectors $X_1$ and $X^2$ as input. Let $\ell = 
\frac{2z+4}{2}$ 
and let $X^{j}[i]$ denote the $i$-th entry of the $j$-th bit vector. Alice and Bob also pad 
their input to be length $2\ell$ with additional $1$s as follows: indices $\ell$ to 
$\ell+\ell/4$ are reserved for Alice and indices $\ell + \ell/4$ to $\ell + \ell/2$ are 
reserved for Bob. Alice counts the number of $1$s she receives from the input to the 
\textsf{PDISJ}$_{2,\ell}$ instance, and pads $1$s in the indices allocated to her to make the 
total number of $1$s be $\ell/4$ and sets the remaining indices to $0$. Similarly, Bob pads 
$1$s in the indices allocated to him to make the total number of $1$s be $\ell/4$ and sets 
the remaining indices to $0$. 

Note, we now have an instance such that Alice and Bob have 
exactly $\ell/4$ non-zero entries in their bit vectors and at most one index contributes to 
the intersection. Further, Alice and Bob append $\frac{k-1}{2}$ coordinates and both set 
their bit vectors corresponding to these instances to $1$. Let $X^{1'}$ and $X^{2'}$ denote 
the padded bit vectors for Alice and Bob and let $\ell' = \ell + \ell/2 + \frac{k-1}{2}$.  
Observe, the total number of indices that correspond to $X^{1'}[i]= X^{2'}[i] = 1$ is at most 
$k$ and at least $k-1$, toggled by \textsf{PDISJ}$_{2,\ell}$ being $1$ or $0$. Using public 
randomness, Alice and Bob agree on a uniformly random permutation $\pi: [\ell'] \to [\ell']$. 
They now apply the permutation $\pi$ to the bit vectors $X^{1'}$ and $X^{2'}$ locally and 
create a clustering instance. 

Upfront, we set locations $p_1, p_2, \ldots p_{\ell'}$, such that the pair-wise distance 
between these locations is $\textrm{max}(c_1, c_2)\textrm{poly}(n)$. Each player creates a 
point at location $p_i$ if $X^{j}[i]=1$, else they do not create any points. The players then 
execute the distributed clustering algorithm on this instance. The clustering algorithm 
allows Alice and Bob to figure our which of their indices correspond to optimal clusters and 
which indices correspond to outliers. Let $\mathcal{I}$ be the set of indices that correspond 
to clusters for Alice. Alice then communicates her input on the index set $\mathcal{I}$ to 
Bob. Bob applies the inverse permutation, $\pi^{-1}$, to $\mathcal{I}$ and verifies if there 
is some index $i^*$ corresponding to the \textsf{PDISJ}$_{2,\ell}$ instance such that $X^{1'}
[i^*]= X^{2'}[i^*] = 1$.

It is left to argue that the above protocol creates clustering instances that are $(1+\alpha, 
\epsilon)$-stable and balanced, any $\delta$-close clustering algorithm indeed solves the 
original \textsf{PDISJ}$_{2,\ell}$ instance and the communication overhead in sending the 
bits corresponding to the index set $\mathcal{I}$ is small. We note here that the clustering 
instance is stable only when \textsf{PDISJ}$_{2,\ell} = 0$, i.e. the clustering algorithm 
outputs a $\delta$-close solution only in the above case. If the algorithm is provided an 
instance that is not stable, it is allowed to fail. If \textsf{PDISJ}$_{2,\ell} = 0$, there 
exists some index $i^* \in [\ell]$ such that $X^{1'}[i^*]= X^{2'}[i^*] = 1$. Therefore, there 
exists a location $p_{\pi(i^*)}$ such that both Alice and Bob created points at this 
location. There are $k-1$ additional locations corresponding to the dummy indices that have 
two points. 

Note, the optimal clustering is then to create a center at every location that 
contains two points and label the remaining $z$ points as outliers. Therefore, the optimal 
clustering cost is $0$. Similar to the $z=0$ proof, a $\delta$-close 
protocol must output $1-4\delta$ clusters correctly. Therefore, with probability at least $1-
\delta$, the clustering algorithm outputs the cluster at location $p_{\pi(i^*)}$ correctly. 
Note, the total number of clusters output should be at most $k$, in turn upper bounding the 
cardinality of $\mathcal{I}$. It is easy to see that Bob can then verify if 
\textsf{PDISJ}$_{2,\ell} = 0$ with $O(k)$ extra communication. Combining this with Theorem 
\ref{thm:apromise_disjointness}, the communication complexity of the protocol is $\Omega(\ell 
- k)$, and since $\ell = \frac{2z+4}{2}$ and $k = o(z)$, the overall communication is 
$\Omega(z)$, which completes the proof. 
\end{proof}


\newpage
\bibliography{clustering}

\newpage

\appendix
\section{Beyond the $\Omega(sk+z)$ Lower Bound}

In some clustering settings, a full assignment of every datapoint to a cluster index might not
be necessary. For instance, we may only need to know the mean of the optimal clusters,
or we may only need to compute cluster assignments online as queries come in.
Now we present an algorithm that uses much less communication to handle
these cases. Specifically, the algorithm uses $O(s\log n+\frac{1}{\epsilon}\log n)$
communication and outputs a function $f$ which can be used to cluster all input points
(but the size of the cluster is too large to send to each machine, which would lead to a full
clustering).
The algorithm is based on subsampling the clustering instance,
inspired by Balcan et al.~\cite{balcan2009agnostic}.

We present an algorithm that uses $O(s\log n+\frac{1}{\epsilon}\log n)$ communication, and clusters
a \emph{sample} of the input points, and then creates a function $f$ which can be used to cluster all input points
(but sending the function to each machine would require $\Theta(sk)$ communication).
It is still an open question whether it is possible to fully cluster all input points with $o(sk)$ communication.
Formally, the theorem is as follows.

\begin{theorem} \label{thm:distr_sample}
Algorithm \ref{alg:distr_sample} takes as input a clustering instance satisfying $(1+\alpha,\epsilon)$-approximation stability
such that each optimal cluster is size at least $(6+\frac{30}{\alpha})\epsilon n+2$
and outputs a function $f:V\rightarrow [k]$ defining a clustering that is $\epsilon$-close to $\mathcal{OPT}$.
The communication complexity is $O(s\log n+\frac{1}{\epsilon}\log n)$.
\end{theorem}

Note that we can cluster any subset $S\subseteq V$ of points in time $O(|S|)$ by sending $S$ to the coordinator and using
$f$ to cluster $S$. But if the goal is to cluster every single point $V$, then we need to use $\Theta(sk)$ communication.

\begin{Frame}[\textbf{Algorithm \ref{alg:distr_sample}} : Distributed $k$-median Clustering under $(1+\alpha,\epsilon)$-approximation stability for large Clusters]
\label{alg:distr_sample}
\ttx{Input}: Distributed points $V=V_1\cup\cdots\cup V_s$
\begin{enumerate}
    \item  Each machine $i$ sends $|V_i|$ to the coordinator
    \item For all $i$, the coordinator computes $s_i=\frac{n_i}{n}\cdot\frac{1}{\epsilon}\log 10k$  and sends $s_i$ to machine $i$.
    \item Each machine $i$ selects $s_i$ points at random and sends them to the coordinator. The coordinator collects all sampled points, $\mathcal{S}$.
    \item The coordinator sets $w=\min\{d(u,v)\mid u,v\in \mathcal{S}\}$ and $\tau=\frac{2\alpha w}{5\epsilon}$.
    \item The coordinator runs Algorithm \ref{alg:iterative_greedy} on $\mathcal{S}$ and outputs the $k$ largest components $C_1',\dots,C_k'$ of $G_{\tau,b}$.
    \item If the total number of points in $C_1',\dots,C_k'$ is $\geq (1-b)n$ and for all $i$, $|C_i'|\geq 2bn$, then continue.
    Otherwise, increase $\tau$ to the smallest $\tau'>\tau$ such that $G_\tau\neq G_{\tau'}$, and go to the previous step.
    \item The coordinator creates a function $f:V\rightarrow [k]$ such that for all $v\in V$, $f(v)=\text{argmin}_{i\in [k]}d_{\text{med}}(v,C_i')$.
\end{enumerate}
\ttx{Output:}  Function $f:V\rightarrow [k]$ which defines a near-optimal clustering
\end{Frame}

\begin{proof}[Proof of Theorem \ref{thm:distr_sample}]
First we show that in step 3, the coordinator's set $\mathcal{S}$ of points is a uniformly random sample of the input of size $\frac{1}{\epsilon}\log 10k$.
Given $i$, given $v\in V_i$, the probability that $v\in\mathcal{S}$ is 
$\frac{1}{n_i}\cdot\frac{n_i}{n}\cdot\frac{1}{\epsilon}\log 10k=\frac{1}{\epsilon}\log 10k$.

Now we follow an analysis similar to \cite{balcan2009agnostic}.
Let $G_i$ denote the good points in $C_i\in\OPT$ and let $B$ denote the bad points in $\OPT$, as defined earlier.
Then since the clusters in $\OPT$ are large enough, we can use a similar reasoning as in Theorem \ref{thm:distr_as} to show that $|G_i|>5|B|$.
Furthermore, since our random sample is size $\Theta(\frac{1}{\epsilon}\ln \left(\frac{k}{\delta} \right))$, we can show that with probability at least
$1-\delta$, $|B\cap \mathcal{S}|<2(1+5/\alpha)\epsilon n$ and $|G_i\cap \mathcal{S}|\geq 4(1+5/\alpha)\epsilon n$, so $|G_i\cap \mathcal{S}|>2|B\cap \mathcal{S}|$ for all $i$.
Therefore, by running the first three steps of Algorithm \ref{alg:neighborhood}, we generate a clustering that is $O(\epsilon/\alpha)$-close to $\OPT$ on the sample.
So taking the largest connected components of this graph gives us a clustering that is $O(\epsilon/\alpha)$-close, restricted to $\mathcal{S}$.
If $w_{avg}$ is unknown, then we can apply a technique similar to Theorem \ref{thm:distr_as}.
Overall, we end up with a function $f$ defining a clustering with error $O(\epsilon)$ over all input points.

The communication complexity in the first two steps of the algorithm is $O(s\log n)$.
The third round communicates $\frac{1}{\epsilon}\log (10k)$ points, which uses $O(\frac{1}{\epsilon}\log k)$ bits of communication.
Therefore, the total communication is $O(s\log n+\frac{1}{\epsilon}\log k)$.

\end{proof}


\section{A Strong Notion of Stability}
\label{sec:separable}

Here we show that \emph{separation} is a strong and general notion of stability, that implies previously well-studied notions such as approximation stability and perturbation resilience. 

\vspace{0.1in}
\noindent \textbf{Lemma \ref{lem:reductions}.}\textit{(restated.)
Given $\alpha,\epsilon>0$, and a clustering objective (such as $k$-median), let $(V,d)$ denote 
a clustering instance which satisfies $c$-separation, for $c>(1+\alpha)n$ (where $n=|V|$).
Then the clustering instance also satisfies $(1+\alpha, \epsilon)$-approximation stability and $(1+\alpha)$-perturbation resilience.}
\begin{proof}
Given an instance $(V,d)$ that satisfies $c$-separation, first we prove this instance satisfies $(1+\alpha, \epsilon)$-approximation stability.
Consider a clustering $\mathcal{C}'$ of $(V,d)$ which is not equal to the optimal clustering. Then there must exist a point $p$ whose center
under $\mathcal{C}'$ is from a different optimal cluster.
Formally, there exist $p\in C_i^*$ and $q\in C_j^*$ such that $q$ is the
center for $p$ under $\mathcal{C}'$.
By definition of $c$-separation, we have $d(p,q)> (1+\alpha) n\cdot \max_i \max_{u,v\in C_i^*} d(u,v)$.
However, note that an upper bound on the optimal cluster cost is
$n \max_i \max_{u,v\in C_i^*} d(u,v)$.
Therefore, the cost of $\mathcal{C}'$ is at least a multiplicative $(1+\alpha)$ factor greater than the optimal clustering cost.
We have proven that any non-optimal clustering is not a $(1+\alpha)$
approximation, therefore, the instance satisfies $(1+\alpha,\epsilon)$-approximation stability.

Now we turn to perturbation resilience.
Assume we are given an arbitrary $(1+\alpha)$-perturbation of the metric $d$.
That is, we are given $d'$ such that for all $p,q\in V$, we have
$d(p,q)\leq d'(p,q)\leq (1+\alpha)\cdot d(p,q)$.
Then the optimal clustering is cost at most $(1+\alpha)\OPT$.
From the previous paragraph, any non-optimal clustering $\mathcal{C}'$ in $d$ must have cost
greater than $(1+\alpha)\OPT$, therefore, $\mathcal{C}'$ must have cost greater than $(1+\alpha)\OPT$ in $d'$.
It follows that the optimal clustering stays the same under $d'$,
and so the instance satisfies $(1+\alpha)$-perturbation resilience.
\end{proof}

\end{document}